\definecolor{darkgreen}{rgb}{0,0.5,0}
\definecolor{darkblue}{rgb}{0,0,0.8}
\definecolor{darkred}{rgb}{0.8,0,0}
\newcommand{\many}{^\#\!\!}
\newtheorem{definition}{Definition}[section]
\newtheorem{theorem}[definition]{Theorem}
\newtheorem{proposition}[definition]{Proposition}
\newcommand{\ignore}[1]{}
\newcommand{\poly}{\mathrm{poly}}
\newcommand{\eps}{\varepsilon}
\newcommand{\set}[1]{\ensuremath \mathcal{#1}}
\newcommand{\key}[1]{\protect\textbf{#1}}
\newcommand{\fun}[1]{{\sf #1}}
\newcommand{\THREAD}[2]{\key{thread} $\fun{#1}$  $\protect\key{uses}$   $#2$:}
\newcommand{\VAR}[1]{\key{var} $#1$:}
\newcommand{\LOCAL}[1]{\key{var} $#1$}
\newcommand{\READONLY}[1]{\key{reads } #1}
\newcommand{\GETS}[2]{$\ensuremath{#1 := #2}$}
\newcommand{\INIT}[2]{$\ensuremath{#1 \gets {\textit{#2}}}$}
\newcommand{\PROTOCOL}[1]{\key{def protocol} $\fun{#1}$}
\newcommand{\REPEAT}[1][]{\key{repeat}\ifthenelse{\equal{#1}{}}{\key{}}{ $\geq #1$ \key{times}}:}
\newcommand{\ASYNC}[1][]{\key{execute}\ifthenelse{\equal{#1}{}}{\key{}}{\key{ for $\geq #1$ rounds}} \key{ruleset}:}
\newcommand{\IF}{\key{if}:}
\newcommand{\IFNONEMPTY}[1]{\key{if exists} {\ensuremath{(#1)}}:}
\newcommand{\ELSE}{\key{else}:}
\newcommand{\INPUT}[1]{#1\ \key{as input}}
\newcommand{\OUTPUT}[1]{#1\ \key{as output}}
\newcommand{\RULE}[4]{$\triangleright$ $(#1)$ + $(#2)$ $\to$ $(#3)$ + $(#4)$}
\newcommand{\LeaderElection}{LeaderElection}
\newcommand{\Majority}{Majority}
\newcommand{\LeaderElectionExact}{LeaderElectionExact}
\newcommand{\MajorityExact}{MajorityExact}
\newcommand{\SemilinearExact}{SemilinearPredicateExact}
\newcommand{\Main}{Main}
\newcommand{\FilteredCoin}{FilteredCoin}
\newcommand{\ReduceSets}{ReduceSets}
\newcommand{\Trim}{Trim}
\newcommand{\ID}{\ensuremath{^\#\!\!\!\!}}
\newcommand{\on}{\ensuremath{\textit{on}}}
\newcommand{\off}{\ensuremath{\textit{off}}}
\title{\bf Population Protocols Are Fast}
\author[1]{Adrian Kosowski}
\affil[1]{Inria Paris
, France}
\author[2]{Przemys\l{}aw~Uzna\'nski}
\affil[2]{
ETH Zürich, Switzerland}
\date{}
\begin{document}
\maketitle

\thispagestyle{empty}

\begin{abstract}
A population protocol describes a set of state change rules for a population of $n$ indistinguishable finite-state agents (automata), undergoing random pairwise interactions. Within this very basic framework, it is possible to resolve a number of fundamental tasks in distributed computing, including: leader election, aggregate and threshold functions on the population, such as majority computation, and plurality consensus. For the first time, we show that solutions to all of these problems can be obtained \emph{quickly} using finite-state protocols. For any input, the designed finite-state protocols converge under a fair random scheduler to an output which is correct with high probability in expected $O(\poly \log n)$ parallel time. In the same setting, we also show protocols which always reach a valid solution, in expected parallel time $O(n^\eps)$, where the number of states of the interacting automata depends only on the choice of $\eps>0$. The stated time bounds hold for \emph{any} semi-linear predicate computable in the population protocol framework.

The key ingredient of our result is the decentralized design of a hierarchy of phase-clocks, which tick at different rates, with the rates of adjacent clocks separated by a factor of $\Theta(\log n)$. The construction of this clock hierarchy relies on a new protocol composition technique, combined with an adapted analysis of a self-organizing process of oscillatory dynamics. This clock hierarchy is used to provide nested synchronization primitives, which allow us to view the population in a global manner and design protocols using a high-level imperative programming language with a (limited) capacity for loops and branching instructions. 

\end{abstract}

\clearpage
\setcounter{page}{1}

\section{Introduction}


The model of population protocols was introduced by Angluin et al.~\cite{DBLP:journals/dc/AngluinADFP06}. A population consists of a large collection of indistinguishable agents that rely on very limited communication and computation power. Agents interact in a pairwise, sequential manner, governed by a scheduler. Rules of the protocol determine how agents update their states, and the update depends only on their states at the moment of interaction. The population protocol framework is particularly well-adapted to applications in interacting particle systems, which includes modeling behavior of biological agents and the programming of chemical systems. Specifically, the population protocol framework is equivalent to the formalism of fixed-volume Chemical Reaction Networks  (CRN's,~\cite{DBLP:journals/dc/ChenCDS17}), and may be used directly for programming in frameworks such as DNS strand computing.

From a computational perspective, we consider the standard input/output population protocol framework in which input is encoded in the form of the starting states of the population of agents, and output is decoded from the population states after a certain time.\footnote{Most simply, one can consider the state of an agent as a $2$-tuple, in which the first entry encodes its ``working state'' and the second its current output.} A basic set of tasks is that of computing \emph{predicates} on the number of agents initially occupying particular input states, so that the outputs of all agents eventually agree on the value of the predicate.\footnote{We use predicates only as a particularly representative example of tasks which can be resolved, noting that some natural tasks of distributed computing, e.g., leader election, are not representable as computational predicates, since not all agents are expected to converge to the same output value.}

The most prevalent scheduler model, also considered in this paper, uses a probabilistic scheduler, where pairs of interacting agents are selected uniformly at random. In this model, one is interested in the \emph{convergence time}, being the number of interactions $t$ until all agents agree on the same output values, which subsequently never change for any agent. Usually, the measure  $t/n$ called \emph{parallel time} is considered, which corresponds to the notion of the number of \emph{rounds} of parallel interactions.  

In this paper we look at the interplay between convergence time and the number of states of the agent's machine. The original formulation assumes that the agents are automata with a constant number of states~\cite{DBLP:journals/dc/AngluinADFP06}. Since then, this assumption has been frequently relaxed in the literature, making the number of states slightly dependent on the population size $n$.
Such a relaxation has allowed for significant progress in the field, and specifically for designs of rapidly converging protocols for basic tasks. These results include majority computation in $O(\log^2 n)$ parallel time by the protocol of Alistarh et al.~\cite{DBLP:conf/soda/AlistarhAG18} using $O(\log n)$ states, and leader election in $O(\log^2 n)$ parallel time with the protocol of Gąsieniec and Stachowiak~\cite{DBLP:conf/soda/GasieniecS18} using $O(\log \log n)$ states, under a fair random scheduler. Both these tasks can also be solved using finite-state agents, but the best known solutions to date required super-linear parallel time.

We remark that in all relevant applications of population protocols related to modeling and design of complex systems, agents are naturally described as finite-state machines. In this light, results in the relaxed model with a super-constant number of states have  a more limited explanatory power for real-world phenomena, and they perhaps more pertinently seen as lower-bounding the population size $n$ for which a $k$-state protocol still operates correctly, for fixed $k$. This is sometimes sufficient to explain processes at reasonably large scales, but not in the macroscopic limit $n \to +\infty$. The model relaxation, up to a state space of polylogarithmic size, also does not seem particularly useful in terms of enlarging the set of tasks which can be resolved in the model.\footnote{Cf.~e.g.~\cite{DBLP:journals/tcs/ChatzigiannakisMNPS11} for an equivalence of the languages of predicates computable using finite state protocols and protocols with less than a polylogarithmic number of states, under some important additional assumptions.}

This paper shows that \emph{it is possible to achieve fast computation in the finite-state model} of population protocols. Previously, results of this type were only known in the model variant with a unique pre-distinguished leader agent, which directly allows for a form of centralized control~\cite{DBLP:journals/dc/AngluinAE08a}. Here, we provide a completely leaderless approach to bypass the lack of synchronicity of the system, and consequently the lack of a common time measure for agents. This is achieved by creating a hierarchy of so-called \emph{phase-clocks}, running at carefully tuned rates, which synchronize agents sufficiently to allow a form of imperative centralized program to be deployed on the system. 

\subsection{Overview of Our Results}

We provide a general framework for programming in what we call \emph{a sequential code} for population protocols, and describe a way of compilation of such code to a set of rules ready for execution in the protocol environment. The language itself involves finite-depth loops of prescribed length (lasting a logarithmic number of rounds, w.h.p.) and branching instructions. It is meant to have sufficient expressive power on the one hand, and to bound the expected execution time of any finite-state protocol expressed in it, with the precise trade-off depending on the adopted variant of compilation. 

\paragraph{Results for w.h.p.\ correct protocols.} 

The basic compilation scheme provides a simple way of achieving protocols which give results which are correct w.h.p., using $O(1)$ states and converging in expected polylogarithmic parallel time. 

As a representative example, we express in the designed language protocols for two basic tasks: leader election and majority. \emph{Leader election} is a problem in which all agents initially hold the same input state, and the goal is for the population to reach a configuration in which exactly one agent holds a distinguished output, labeled as the leader. \emph{Majority} (in its generalized comparison version) is a problem in which some subset of agents holds initial state $A$, a disjoint subset of the population (of different cardinality) holds initial state $B$, and the population should converge to an identical boolean output for all agents, depending only on which of the initial sets is the larger one. 

As a direct consequence of the  formulation of our protocols within the framework (with no need for further analysis), we obtain the first constant-state protocols for leader election and majority, converging in polylogarithmic expected time w.h.p. to a w.h.p. correct result, under a fair scheduler (see Section~\ref{le} and Section~\ref{maj}). We remark that for majority, the high probability of correctness holds regardless of the gap between the sizes of the compared sets. 

We remark that in the framework, the precise convergence time of the protocols can be given as $O(\log^{c+1} n)$ rounds, where $c$ is the depth of nested ``repeat'' loops in the formulation of the protocol. In the adopted implementations it is thus $O(\log^2 n)$ rounds for leader election and $O(\log^3 n)$ rounds for majority.

Next, the obtained solution to leader election allows us to exploit the leader-driven programming framework of~\cite{DBLP:journals/dc/AngluinAE08a}, and to combine it with our framework (see Section~\ref{general-semi-linear}). We apply protocols generated with (compiled within) this framework as a blackbox, composed with the solution to leader election. This allows us to compute any semi-linear predicate (i.e., one which can be resolved using  population protocols with $O(1)$ states~\cite{DBLP:journals/dc/AngluinADFP06}) using a protocol converging in polylogarithmic time ($O(\log^5 n)$ rounds). Such predicates are much more general than the majority problem, including  threshold estimation of the size of a set in relation to the entire population, as well as computation of modulo remainders. The task of \emph{plurality consensus} (cf.~\cite{DBLP:conf/soda/BecchettiCNPS15,DBLP:conf/icalp/BerenbrinkFGK16,DBLP:conf/podc/GhaffariP16a}), in which the goal is to identify the largest of $l$ input sets, can also be expressed using semi-linear predicates. A solution to plurality consensus is in fact obtained with a straightforward adaptation of our protocol for majority, with the same convergence time.\footnote{We remark that the number of states in the solution to plurality consensus will depend on $l$, and after some optimization can be bounded as $O(l^2)$. We leave as open the question if this dependence is optimal in the studied setting.}

\paragraph{Results for always-correct protocols.} 

All the designed protocols can be made correct with certainty, converging in $O(n^\varepsilon)$ time, where $\eps>0$ is an arbitrarily chosen parameter, influencing the number of states of the agent. Alternatively, polylogarithmic running time can be preserved at the cost of using $O(\log \log n)$ states per agent.

\paragraph{Techniques, proof outline, and originality.}

The idea of programming protocols using clock hierarchies goes back to the leader-based framework of~\cite{DBLP:journals/dc/AngluinAE08a}. With respect to~\cite{DBLP:journals/dc/AngluinAE08a}, the main new element is the operation of the phase-clock hierarchy itself, which is different and more involved in terms of its (leaderless) design. It also provides stronger synchronization guarantees than those required in~\cite{DBLP:journals/dc/AngluinAE08a}, promising that all agents are executing the same line of code at the same time, w.h.p.

Our clock design hierarchy itself relies on several building blocks. We first revisit the analysis of a self-stabilizing 7-state oscillatory protocol of the authors~\cite{DBLP:journals/corr/DudekK17}, and extend this oscillating mechanism to obtain a modulo-3 phase clock, operating correctly in the presence of the correct concentration of agents in the clock's designated \emph{control state $X$} in the population. We then show how to extend the modulo-3 phase clock into a modulo-$m$ phase-clock, for $m$ being an arbitrarily large constant (cf. Section~\ref{sec:clocks}). When the clock is operating in the correct conditions and has stabilized to its intended behavior, this allows all agents to agree as to the modulo-$m$ phase indicated by the clock (up to a difference of at most $1$), w.h.p.

The construction of the hierarchy now relies on a mechanism to drive the clocks. The internals of the base modulo-$m$ clock progress following the asynchronous random scheduler of the system. The phase of the clock progresses every $\Theta(\log n)$ rounds w.h.p. when the control state $X$ is represented by the right number $\many X$ of agents in the population, contained in the range $\many X \in [1,n^{1-\eps}]$, where $\eps>0$ is an arbitrarily fixed design parameter of the clock (see Theorem~\ref{th:variantofclock}). For subsequent clocks in the hierarchy, we use the same control state $X$, but slow down the execution of rules of the $(i+1)$-st clock, using the $i$-th clock so that $\Theta(n)$ rules of the $(i+1)$-st clock are executed during a single period of the $i$-th clock. In this way, the period of the $i$-th clock in the hierarchy is $\Theta(\log n)$ rounds, for $i=1,2,3,\ldots$. The number of clocks placed in the hierarchy depends on the depth of nested loops in the executed program.

To the best of our knowledge, this is the first systematic  approach to decentralized clock composition in the literature.\footnote{For a simpler ad-hoc composition of two non-self-stabilizing clocks with different designs, see e.g.~\cite{DBLP:conf/soda/GasieniecS18,GSU18}.} We remark on two essential points in the construction.

First of all, compositions of clocks (or, for that matter, or most other population protocols) are notoriously hard to analyse rigorously. Here, we achieve this by using each clock to emulate a slower scheduler for the next clock in the population, keeping the clock protocol otherwise independent. We use to our advantage the fact that all of the composed protocols have a finite number of states, which allows us in some crucial parts to rely on the continuous approximation of the protocol, corresponding to the limit $n\to +\infty$ (mean-field approximation). For a $k$-state protocol with $k$ fixed, we then identify the state of the population with a point in the phase space $[0,1]^k$, with each coordinate corresponding to the proportion of the number of agents holding the given state in the population. The evolution of states can then be approximated using the continuous dynamics (set of ordinary differential equations) corresponding to the continuous limit of the protocol. The  scope of applicability of this type of approximation is by now well understood for population protocols (cf.~e.g.~\cite{DBLP:conf/icalp/CzyzowiczGKKSU15,DBLP:conf/mfcs/BournezFK12,DBLP:journals/corr/DudekK17}). Informally, over time scales of the length analyzed in this construction (parallel time polynomial in $n$), the behavior of the system 
in parts of the phase space, which are separated from its singularities and fixed points by a distance of at least $n^{-0.5+\Omega(1)}$, can be analyzed using standard concentration bounds (cf.~e.g.~\cite{DBLP:journals/corr/DudekK17}[Lemma~1 and subsequent applications of Azuma's inequality]). The clock protocols we use do not have to operate all the time in such an area of the state space, but once they start correct operation, they would need to cross such an area in order to display subsequent incorrect behavior, due to which such behavior becomes a low probability event. The analysis obtained through continuous approximation is robust and holds over different types of random schedulers. It generalizes naturally from an asynchronous scheduler to a parallel scheduler, which activates a random matching in the population in every step. This is crucial, since we do not know how to simulate a slowed version of the asynchronous scheduler to propagate successive clocks, but succeed in using clocks to emulate an (almost perfect) parallel matching scheduler in the population, working at a rate of one activation of the scheduler per period of the clock, thus giving the required slowdown factor of $\Theta(\log n)$ in the construction of the hierarchy  (see Section~\ref{hierarchia} for details).

The second crucial aspect concerns maintaining the correct concentration of agents in state $X$, which controls the clocks in the hierarchy. We start by remarking on the reason for our choice of the clock mechanism. It is relatively easy~\cite{DBLP:journals/dc/AngluinAE08a} to design phase clocks which run in a setting with a unique leader ($\many X=1$), however (as suggested before) the need to solve the leader election problem is perhaps the main source of hardness in our setting. One might also consider using {\em junta-driven} phase clocks, designed by G\k{a}sieniec and Stachowiak~\cite{DBLP:conf/soda/GasieniecS18}. This clock mechanism operates correctly in the range of parameters $\many X \in [1,n^{1-\delta}]$, which is the same requirement as in our clock mechanism, and also uses $O(1)$ states. However, the clock from~\cite{DBLP:conf/soda/GasieniecS18} will find itself stuck indefinitely in a central area of the phase space if the clock is initialized when $\many X=\Theta(n)$. When the value of $\many X$ is reduced, it will eventually start operating correctly with a period of $\Theta(\log n)$ rounds, but this will only happen after exponentially long time, in expectation.\footnote{This is precisely the reason why the solution to leader election from~\cite{DBLP:conf/soda/GasieniecS18} uses $\Theta(\log \log n)$ states per agent, and not $O(1)$ states.} This problem is completely alleviated in our work by building on the self-stabilizing oscillator design from~\cite{DBLP:journals/corr/DudekK17}, which leaves the central area of its state space in $O(\log n)$ rounds of the (sequential or parallel) scheduler driving the clock, in expectation.\footnote{The downside is that the clock we use is a little harder to manipulate, hence the solution to leader election and subsequent tasks becomes more involved in the layer of clock synchronization.}

The only remaining difficulty is that of designing a process which adapts the number of agents in the controlling state $\many X$ so that it is in the correct range $\many X \in n^{1-\eps}$ for a sufficiently long time to allow the clock hierarchy to organize itself and operate for a polylogarithmic number of rounds, i.e., for at least $\Omega(\log n)$ periods of the outermost clock. This is achieved by running a separate building block: a dynamical process which starts with all agents in a state representing $X$, and reduces $\many X$ over time. We denote the expected time from which $\many X \leq n^{1-\eps}$ is satisfied indefinitely in the protocol as $\cal T$; then, the time of convergence of  protocols formulated in the framework to a w.h.p. correct result will be given as  $\mathcal{T} + O(\poly \log n)$.

We consider two distinct ways of reducing $\many X$. For use in the always-correct framework, we consider such an auxiliary protocol with the simple rule of the form: \RULE{X}{X}{X}{\neg X}, which eliminates an agent in state $X$ whenever two agents in such a state meet. This guarantees that $\many X$ is non-increasing over time, that $\many X \geq 1$ is always satisfied, and that $\many X \leq n^{1-\eps}$ holds after ${\cal T} = O(n^{\eps})$ parallel time, w.h.p. (see Proposition~\ref{simpleelimination}).  For use in the faster w.h.p.\ framework, we use a slightly more involved $k$-level process which eliminates $X$ more quickly, and achieves  ${\cal T} = O(\log^k n)$, where $k$ is an arbitrarily chosen integer in the protocol design (see Theorem~\ref{k-stanowa-dynamika} for details). This approach will, however, result in an eventual disappearance of $X$ ($\many X = 0$ from some time step onwards), where $\many X > 0$ continues for $\Omega (\log^k n)$ after time $\cal T$, w.h.p. This is long enough for the protocol to successfully complete (w.h.p.), if $k$ is chosen suitably large with respect to the depth of the program formulation.

The designed execution framework comes with a number of guarantees which allow for the analysis of protocols formulated in it. In Section~\ref{le}, we put forward an appropriate leader election protocol and provide the corresponding analysis, whereas in Section~\ref{maj} we provide and describe a protocol for majority. Extending the protocol for leader election, the more general case of semi-linear predicates is handled in Section~\ref{general-semi-linear}.

In general, the w.h.p. versions of the protocols expressed in the programming language are relatively straightforward, and often constitute a simplification of previous designs with a super-constant number of states (e.g., the majority protocol mimics the solution known from~\cite{DBLP:conf/soda/AlistarhAG18}). Providing an always-correct variant of the protocols is a bit more involved, since some of the guarantees of correctness given by the primitives of the programming framework only hold w.h.p. (cf.~Theorem~\ref{th:guarantees}). A solution which is always correct is achieved by carefully combining a w.h.p. solution in the framework with a slower, deterministically correct solution running in parallel. This process of protocol combination is based on a notion of \emph{threads}, which are coupled (i.e., informally, executed asynchronously in parallel) in the framework. Threads may share some variables, and the specific interaction between the fast (main) thread and the deterministic thread is chosen in a separate (ad hoc) manner for each of the designed protocols, to allow for a proof of correctness of the composed protocol.

\paragraph{Relation to impossibility results.} The deterministically correct protocols which we present do not stand in contradiction to existing lower bounds on majority and leader election from Doty and Soloveichik~\cite{DBLP:conf/wdag/DotyS15}, Alistarh et al.~\cite{DBLP:conf/soda/AlistarhAEGR17} and Alistarh et al.~\cite{DBLP:conf/soda/AlistarhAG18}. Those impossibility results were derived (and stated) conditioned on an assumption on \emph{stable} computation --- to state it briefly, that once the protocol has reached a state that is a valid output, it remains in this state indefinitely, under any sequence of interactions (see~\cite{DBLP:conf/wdag/DotyS15} for a formal definition of stability). This seemingly safe assumption is in fact prohibitive, as it prevents, e.g., an approach of using fast protocol to quickly compute an output that is correct with high probability, and combining it with a slow and always correct protocol (running in the ``background'') to make sure that the computation is always correct, eventually. This is precisely the way we proceed to solve both leader election and majority in our framework.

When speaking of convergence time, we must emphasize that detecting whether a population protocol has converged or not is not possible locally, in any model. Indeed, for most basic tasks, such as majority or leader election, no agent can decide whether convergence has already been achieved or whether the outputs of some agents will subsequently change, due to the properties of the random scheduler which may isolate some subset of agents for an arbitrarily long time with positive probability, regardless of the applied protocol. 

\paragraph{Extensions of results.} We leave as open the question of whether always-correct finite-state protocols for the problems considered in this work converge in polylogarithmic time. In particular, we do not know if a solution exists to the following relaxed variant of the leader election problem: obtaining a finite-state protocol which creates in polylogarithmic time a junta having size $\many X \leq O(n^{1-\eps})$, while guaranteeing that the junta remains non-empty at (almost) all times if the dynamics are run forever. A solution to this problem appears to be a necessary basic building block for controlling all phase clock designs known to us, including the one considered in this work.

We also remark that in our solutions, after convergence to a correct output, the agents are still allowed to update their states in the part which is not used for encoding output. This is the case for our protocols, which continue running at least for some time after convergence. The time after which state changes in the population cease to happen, i.e., after which the protocol becomes silent, is $O(\poly\log n)$ for the w.h.p.\ schemes we present, whereas the always-correct schemes as presented in this paper never become silent. The latter solution can be modified to become silent after polynomial time (informally: once the deterministic thread has terminated with the same result as the main thread, for all agents), but the time after which the protocol becomes silent is still significantly longer than its $O(n^{\eps})$ convergence time. 

Finally, we note that in all our results are phrased in the randomized model of population protocols, in which agents have access to a constant number of fair coin tosses in each iteration, which they can use to select the transition rule in a given iteration. Phrasing the protocols to enforce deterministic operation is possible by simulating coin tosses from randomness of the fair scheduler, using the so-called synthetic coin technique~\cite{DBLP:conf/soda/AlistarhAEGR17}.

\subsection{Other Related Work}


\paragraph{Population Protocols.}

The population protocol model  captures the way in which the complex
behavior of systems (biological, chemical, sensor networks, etc.) emerges from the underlying local pairwise interactions
of agents. The original work of Angluin et al.~\cite{DBLP:journals/dc/AngluinADFP06, DBLP:journals/dc/AngluinAE08a}
 was motivated by applications in sensor mobility.
Despite the limited computational capabilities of individual sensors, population protocols permit the computation of two important classes of functions:
threshold predicates, which decide if the weighted average of types appearing in the population exceeds
a certain value, and modulo remainders of similar weighted averages. More precisely, the family of predicates computable in the finite-state population protocol model under the assumption of stability has been characterized as that of semi-linear predicates, or equivalently predicates expressible in second-order Presburger arithmetic~\cite{DBLP:journals/dc/AngluinADFP06}.

\paragraph{Majority and Leader Election.} The most common problems considered in the context of population protocols include \emph{majority} and \emph{leader election}. The majority problem is a special form of consensus~\cite{DBLP:conf/fct/Fischer83}, in which the final configuration reflects the unique color of the largest fraction of the  population initially colored with two colors.\footnote{The variant considered in this work is more general, since we allow some agents to be initially uncolored.} The majority problem was first posed in the context of population protocols in~\cite{DBLP:journals/dc/AngluinADFP06}
and later a 3-state protocol for \emph{approximate} majority was given in~\cite{DBLP:journals/dc/AngluinAE08}, which converges in $O(\log n)$ time, but requires that the population gap is $\Omega(\sqrt{n \log n})$. Draief and Vojnovi\'c \cite{DBLP:journals/siamco/DraiefV12} and later Mertzios et al. \cite{DBLP:conf/icalp/MertziosNRS14} considered a 4-state protocol for exact majority, however with a prohibitive polynomial convergence time ($O(n \log n)$ expected parallel time). Alistarh et al. \cite{DBLP:conf/podc/AlistarhGV15} were the first to provide a protocol for exact majority with polylogarithmic parallel time, however the number of states there can be polynomial if the initial gap is small enough ($O(n)$ states, $O(\log^2 n)$ time whp). Further studies on time-space trade-offs can be found
in Alistarh et al.~\cite{DBLP:conf/soda/AlistarhAEGR17} 
and Bilke et al.~\cite{DBLP:conf/podc/BilkeCER17} 
culminating with Alistarh et al.~\cite{DBLP:conf/soda/AlistarhAG18} showing a protocol with $O(\log n)$ states and $O(\log^2 n)$ expected time to elect a majority.

In the leader election problem, in the final configuration an unique agent must
converge to a {\em leader state} and every other agent has to stabilise in a {\em follower} state. A series of papers \cite{DBLP:conf/icalp/AlistarhG15}, \cite{DBLP:conf/soda/AlistarhAEGR17}, \cite{DBLP:conf/podc/BilkeCER17} culminated with Alistarh et al.~\cite{DBLP:conf/soda/AlistarhAG18} achieving a $O(\log n)$ states protocol electing a leader in $O(\log^2 n)$ expected time, improved by Berenbrink et al.~\cite{DBLP:conf/soda/BerenbrinkKKO18} to $O(\log^2 n)$ time with high probability. In a breakthrough paper, Gąsieniec and Stachowiak~\cite{DBLP:conf/soda/GasieniecS18} reduced the number of states exponentially to $O(\log \log n)$, and later Gąsieniec et al.~\cite{GSU18} achieved the same number of states but improved the time to $O(\log n \log \log n)$.


\paragraph{Progress on Phase Clocks.}

Unsurprisingly, the more efficient protocols in or around the population protocol framework~\cite{DBLP:conf/soda/BoczkowskiKN17,DBLP:conf/soda/GasieniecS18,DBLP:conf/soda/AlistarhAG18,GSU18} have focused on ways to allow some form of synchronization on the system to appear, in the form of a \emph{phase clock} or closely related construct. 
This line of work includes, in particular, {\em leader-less} phase clocks applied by Alistarh  et al. in \cite{DBLP:conf/soda/AlistarhAG18} and
{\em junta-driven} phase clocks used by G\k{a}sieniec and Stachowiak~\cite{DBLP:conf/soda/GasieniecS18}.

\subsection{Preliminaries and Notation}

Population protocols are expressed in the form of a set of rules, describing the state transitions of a pair of interacting agent. When designing protocols with $O(1)$ states, we use the convention that the state space of the agent is the Cartesian product of a certain number of boolean flags known as \emph{state variables}, which may be set or unset for each agent (we use the symbol $\on$ to denote the truth value and $\off$ to denote the false value). A rule can then be conveniently described through bit-masks, i.e., by specifying a set of $4$ boolean formulas $\Sigma_1, \Sigma_2, \Sigma_3, \Sigma_4$ on the state variables, written as follows: \RULE{\Sigma_1}{\Sigma_2}{\Sigma_3}{\Sigma_4}. Such a rule may be activated when the pair of interacting agents satisfy formulas $\Sigma_1$ and $\Sigma_2$ on their state variables, respectively. The execution of the rule corresponds to a minimal update of the states of the agents so that formulas $\Sigma_3$ and $\Sigma_4$ are satisfied by the states of the respective agents after the update. The special symbol $(.)$ is used to denote the empty boolean formula, which matches any agent. 

By convention, boolean variables associated with agents will be denoted by capital letters of the alphabet, such as $A$. The set of agents in the population for which $A$ is set will be denoted by $\mathcal{A}$. The number of agents in the population is denoted by $n$.

We apply a convention in which the scheduler picks exactly one rule uniformly at random from the set of rules of the protocol, and executes it for the interacting agent pair if it is matching. Protocols designed in this framework can be translated into frameworks in which all matching rules are executed systematically, e.g., in a top-down manner.

Our constructions rely on the idea of composing multiple protocols. In the simplest setting, this is obtained by defining each protocol with its own ruleset, and putting the rulesets of the different protocols together into one. We call protocols which have been composed like this \emph{threads}. (To ensure fairness of time division between the threads, mainly for the sake of better intuition of the reader, we will assume that each protocol in each thread is written down with the same number of rules; this can be enforced by creating a constant number of copies of the respective rules up to the least common multiple of the number of rules of respective threads.) We speak of \emph{composing protocol $P_2$ on top of protocol $P_1$} if the ruleset of protocol $P_2$ does not affect the values of boolean variables used in protocol $P_1$. Intuitively, the (asymptotic) execution of a protocol is not affected by composing a constant number of other protocols on top of it.

\section{Programming Framework for Protocol Formulation}

\subsection{Language Specification}
Our simple language for writing imperative code is based on the following constructs and assumptions. The code of the protocol is a collection of threads, sharing the same pool of boolean state variables. Variables are defined and initialized at protocol startup, either as $X \gets \on$ or $X \gets \off$.
The code of each thread is a finite-depth branching program with loops.
The only available control instructions are the following:
\begin{itemize}
\item \IFNONEMPTY{$condition$}\ [block] \ELSE\ [block] --- the branching instruction, where 'condition' is a boolean expression on local state variables.
\item \REPEAT\ [block] --- the outermost control loop of a thread.
\item \REPEAT[c \ln n]\ [block] --- possibly nested loops within a thread, where $c$ is an explicitly stated positive integer.
\end{itemize}
These constructs are intended to have an intuitive interpretation for the reader (which is indeed true in some circumstances w.h.p., as shown later). The intuition for the branching instruction corresponds to conditioning on the existence of at least one agent in the population for which the given boolean formula on local state variables evaluates to true.

The only primitive instructions are the following:
\begin{itemize}
\item \ASYNC[c \ln n]\ [ruleset], followed by a set of primitive rules.
\item The assignment instruction \GETS{X}{condition}, where $X$ is a variable and $condition$ is a boolean condition on local variables.
\item \ASYNC[c \ln n]\ [ruleset], followed by a set of primitive rules.
\end{itemize}
Intuitively, the assignment instruction is intended to update the states of all agents in the population to set $\mathcal{X}$ if and only if $condition$ is true, while the \ASYNC\ instruction is intended to run the provided set of rules on the population for the specified number of parallel rounds (allowing for nesting of population protocols). An execution of a rule from a specified ruleset or of a local variable assignment corresponding to the given assignment instruction for an agent in the population is called an \emph{operation}.

\subsection{Outline of Compilation and Execution Model}

An informal description of the compilation process of the code is as follows.  A certain (constant) number of threads are specified through code by the protocol designer. The remaining threads (also a constant number, dependent on the loop depth of the code formulation) are added internally to allow for clock operation and synchronization. For a program with $C_T$ threads, where $C_T$ is a constant, interacting agents pick a rule corresponding to the current step of each of the $C_T$ threads, choosing a thread u.a.r\ with probability $1/C_T$, independently of other interactions. For each thread, each agent runs its own control loop, which describes a finite state automaton. The automaton proceeds through alternating phases of ruleset execution, followed by synchronization. When executing a ruleset at depth $a$ in the loop structure of the code, its synchronization is governed by the $a$-th outermost clock of the system.  We formalize this description in Section~\ref{sec:compilation}.

\subsection{Compiled Population Protocols: Guarantees on Behavior and Convergence Time}

We will require that the compilation framework produces protocols which satisfy two constraints throughout its execution, known as the \emph{guaranteed behavior}.
\begin{definition}[Guaranteed behavior] The programming framework admits the \emph{guaranteed behavior property} if the following conditions are jointly fulfilled for any protocol compiled in it:
\begin{itemize}
\item At any time, for any agent, a state variable may only be modified in the way given by some primitive operation appearing in the code, i.e., by a rule in a ruleset or by an assignment operation.
\item Suppose that in some execution of the protocol, a state variable (or, more generally, a boolean formula on state variables) $S$ satisfies $\mathcal{S}(t) = \emptyset$ for all $t > t_0$. If at some time $t_1 > t_0$ an agent is executing an operation not contained in a branch of the form ``\IFNONEMPTY{S}'',  then this agent will never execute any operation contained in this branch in the future.
\end{itemize}
\end{definition}

In addition, we expect some further properties from the compiled protocol, which are to be met w.h.p. To describe them, we now introduce the following notation for an execution of the compiled protocol.
\begin{definition}[Synchronized iterations]
We say a protocol is \emph{synchronized} at a given moment of time if within a thread either all agents have their instruction pointer pointing to the same instruction location in each thread (all agents are active), or are in the process of entering or leaving the block with this instruction.

An \emph{iteration} is a time interval defined by looking at one agent (fixed arbitrarily at the start of the protocol) and considering the period of time from one time when this agent activates the first instruction of the outermost loop, to the next such time.

Finally, an iteration is said to be \emph{synchronized} if the protocol is synchronized at all moments of time during this iteration, all agents follow the same execution path, and moreover all instructions or rulesets contained within each internal \key{repeat} or \key{execute} instruction on the execution path of the program are executed (with all agents active) at least for the specified number of rounds. 
\end{definition}
We note that, in particular, for any \ASYNC[c \ln n]\ [ruleset] on the execution path instruction, in any synchronized iteration there must exist a period of $c\ln n$ rounds (or $c n \ln n$ steps), during which the program emulates an execution of a simple protocol consisting of set of rules $ruleset$ only under a \emph{fair} random scheduler. Note that, directly before and after this period, the given ruleset may also possibly be run for some further time for an arbitrary subset of agents, thus emulating the behavior of some \emph{unfair} scheduler on which no promises are made.

Our expectation that the programmed code does what it says it should is met in some synchronized iterations, which are called good iterations. 
\begin{definition}[Good iterations] An iteration is said to be \emph{good} if it is synchronized and additionally all executions of the assignment and \IFNONEMPTY{} instructions performed within the iteration reach their usual (w.h.p.) outcome for all agents, where:
\begin{itemize}
\item The expected outcome of a \GETS{X}{\Sigma} operation is that for each agent, the value of the state variable $X$ is set to the value of the boolean formula $\Sigma$ given on its local state variables.
\item The expected outcome of a \IFNONEMPTY{\Sigma} operation is that instructions contained in the ``\IF'' block are subsequently executed if and only if the boolean formula $\Sigma$ on local state variables evaluates to true for at least one agent in the population, and that instructions contained in the (optional) ``\ELSE'' block are otherwise executed.
\end{itemize}
\end{definition}

We are now ready to state the main Theorem on the properties of the framework. 
\begin{theorem}
\label{th:guarantees}
Any program expressed in the provided programming language can be compiled into a population protocol with $O(1)$ states, such that:
\begin{itemize}
\item[(i)] The guaranteed behavior constraints are always met;
\item[(ii)] Subject to a choice made at the time of compilation, one of the following claims holds:
\begin{itemize}
\item[(a)] Starting from the initialization of the protocol, after an initialization phase taking some number of rounds $O(\poly\log n)$, each of the $\Omega(\log n)$ iterations which follow directly afterwards is good and takes at most $O(\poly\log n)$ time; Or
\item[(b)] Starting from an arbitrary moment of time, after an initialization phase taking some number of rounds $O(n^{\eps})$, where $\eps>0$ is an arbitrarily chosen constant influencing the number of states of the compiled protocol, each of the $\Omega(\log n)$ iterations which follow directly afterwards is good and takes at most $O(\poly\log n)$ time. 
\end{itemize}
\end{itemize}
\end{theorem}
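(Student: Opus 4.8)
The plan is to make the compilation of Section~\ref{sec:compilation} explicit and then treat the two items separately: item (i) by a purely syntactic argument valid in every execution, and item (ii) by assembling the three building blocks announced earlier --- the $\many X$-reduction process, the self-stabilizing mod-$m$ clock of Theorem~\ref{th:variantofclock}, and the clock hierarchy of Section~\ref{hierarchia} --- and then bounding the probability of any failure over the relevant time window. In the compilation, each source thread becomes a finite-state automaton whose control state stores the instruction pointer together with a bounded amount of loop-counter and branch-decision bookkeeping; each assignment \GETS{X}{\Sigma} compiles to one rule; each \texttt{execute} block compiles to its verbatim ruleset, gated and timed by the clock governing its nesting depth; each bounded loop \REPEAT[c\ln n] is driven by counting ticks of the clock at the corresponding depth; each conditional \IFNONEMPTY{S} is implemented by a \emph{fresh} one-way epidemic started by the agents currently in $\mathcal S$ and read out after one clock period; and the barrier separating two consecutive instructions at depth $a$ is built from the depth-$a$ clock. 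On top of the $C_T$ source threads we add a constant number of auxiliary threads: one running the chosen $\many X$-reduction process, and, for each nesting level, one running that level's clock, with the level-$(i{+}1)$ clock driven by the emulated parallel-matching scheduler that the level-$i$ clock supplies. Since the loop depth $c$ and the accuracy parameter are fixed, the total number of states is $O(1)$.

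\emph{Item (i).} The first guaranteed-behavior clause is immediate: the only rules touching a source state variable are the images of the assignments and of the rules inside \texttt{execute} blocks, i.e.\ exactly the primitive operations of the code, while all clock and synchronization machinery works on disjoint auxiliary variables. For the second clause, suppose $\mathcal S(t)=\emptyset$ for all $t>t_0$. The epidemic implementing an \IFNONEMPTY{S} test is monotone: a ``detected'' token is created only at the start of a test, only by an agent then lying in $\mathcal S$, and afterwards only copied. Hence every such test begun after $t_0$ leaves every agent tokenless, so every agent reaching that branch point after $t_0$ takes the \ELSE\ direction (or skips the block); an agent executing after $t_0$ an operation outside the branch could only re-enter it via a later such test, which again fails, so it never does. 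Nothing here uses synchronization or clock correctness, so the clause holds in all executions.

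\emph{Item (ii).} We argue in phases. For case (b), take the auxiliary process to be the rule \RULE{X}{X}{X}{\neg X}: by Proposition~\ref{simpleelimination}, from \emph{any} configuration $\many X$ is non-increasing, stays $\geq 1$, and becomes $\leq n^{1-\eps}$ within $\mathcal T=O(n^{\eps})$ parallel time w.h.p. For case (a), take instead the $k$-level process of Theorem~\ref{k-stanowa-dynamika}, which reaches $\many X\leq n^{1-\eps}$ within $\mathcal T=O(\log^{k}n)$ and then keeps $\many X>0$ for a further $\Omega(\log^{k}n)$ rounds, choosing $k$ large with respect to the loop depth $c$, say $k=c+2$; here we must start from the protocol's initialization, so that $\many X=n$ initially. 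Once $\many X\in[1,n^{1-\eps}]$, Theorem~\ref{th:variantofclock} gives that the base mod-$m$ clock leaves the central region of its phase space and settles into its period-$\Theta(\log n)$ oscillation within $O(\log n)$ rounds w.h.p.; inducting on the level, once the level-$i$ clock runs correctly it supplies level $i{+}1$ with a near-perfect parallel-matching scheduler, and the analysis of Section~\ref{hierarchia} --- mean-field approximation of the finite-state clock in the fixed phase space $[0,1]^{k}$, plus Azuma-type concentration valid while the trajectory stays $n^{-1/2+\Omega(1)}$ away from singularities and fixed points --- shows the level-$(i{+}1)$ clock also stabilizes within $O(\poly\log n)$ further rounds, with failure probability $\exp(-n^{\Omega(1)})$; a union bound over the $O(1)$ levels and the $\poly n$ rounds of interest keeps the total failure probability $o(1)$. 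This gives an initialization phase of length $\mathcal T+O(\poly\log n)$, i.e.\ $O(\poly\log n)$ in case (a) and $O(n^{\eps})$ in case (b) --- and in case (b) it works from an arbitrary starting moment, since every ingredient (the elimination rule, the self-stabilizing oscillator, the barriers) recovers from an arbitrary configuration. Afterwards all barriers fire correctly, so the protocol is synchronized, every \texttt{execute} block runs for at least its $c\ln n$ fair rounds, and every assignment and conditional attains its w.h.p.\ outcome; thus each iteration is good. One iteration is a single pass of the outermost loop, whose running time is dominated by the $c$ levels of $\Theta(\log n)$-iteration loops wrapping $\Theta(\log n)$-round rulesets, hence $O(\log^{c+1}n)=O(\poly\log n)$. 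Finally there are $\Omega(\log n)$ consecutive good iterations: in case (a), $\Omega(\log^{k}n)/O(\log^{c+1}n)=\Omega(\log n)$ by the choice of $k$; in case (b), the clocks remain correct for $\poly n\gg\poly\log n$ rounds, comfortably containing $\Omega(\log n)$ iterations.

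\emph{Main obstacle.} The compilation bookkeeping and item (i) are routine. The substantive difficulty is the inductive step of the hierarchy: showing that driving a clock by an emulated, clock-generated slow parallel scheduler (rather than a genuine random one) does not corrupt its oscillatory dynamics, and controlling how error probabilities and timing slack accumulate across the $O(1)$ levels and across the $\poly\log n$ rounds spanned by all the good iterations. This is precisely where the continuous/mean-field approximation and the self-stabilization of the Dudek--Kosowski oscillator do the work, and it is carried out in Sections~\ref{sec:clocks} and~\ref{hierarchia}.
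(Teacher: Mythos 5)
Your proposal is correct and follows essentially the same route as the paper, which itself establishes Theorem~\ref{th:guarantees} only by pointing to the compilation mechanism of Section~\ref{sec:compilation} and the clock-hierarchy construction of Section~\ref{sec:hierarchy}; you assemble exactly those ingredients (the monotone epidemic for the second guaranteed-behavior clause, Proposition~\ref{simpleelimination} for case (b), the $k$-level reduction of $\many X$ for case (a), Theorem~\ref{th:variantofclock} for the base clock, and the slowed-scheduler induction of Section~\ref{hierarchia}). The only slight inaccuracy is that an assignment compiles not to a single rule but to the two-phase triggered ruleset of Fig.~\ref{fig:assignment} (needed for the exactly-once w.h.p.\ semantics), which does not affect your argument.
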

The claim of the Theorem follows from the compilation mechanism described in Section~\ref{sec:compilation} and the construction of the phase clock hierarchy described in Section~\ref{sec:hierarchy}.

\section{Warmup: Programming with High Probability of Correctness}

Using the framework is easiest when the goal is to achieve w.h.p.\ correctness of the result. In this scenario, it will, in most cases, be sufficient to create one Main thread only. The execution of the code can be seen as follows: for some time, the provided rulesets will be executed in no particular order. Then, w.h.p., some number of iterations of the outermost loop of the code will be executed as designed, i.e., respecting all rules of sequential programming, conditions, and loop limits. At some point during such correct execution of the program, after a sufficient number of iterations are completed w.h.p., it may slow (or stop) without warning.

Thus, the following use of the framework is safe and recommended: the Main thread of the code should be given as ``\REPEAT\ [Program]'', where ``Program'' is a piece of code which is required to solve the problem correctly in a sequential setting, subject to the two additional constraints: (1) ``Program'' does not modify any of the states of the input of the protocol (if any), and resets any other variables it may use to a valid initialization if it detects their initialization to be invalid; (2) If the output variables computed in one execution of ``Program'' were a valid answer to the problem, then the next execution of ``Program'' in the next iteration of the outer ``\REPEAT'' loop should not alter the state of output variables for any agent.

In the above, constraint (1) eliminates the problem with the initial uncontrolled phase of the execution before its correct operation starts, and (2) handles the issue with the program stopping at an unpredictable moment (e.g., when writing output variables).

To illustrate this approach, we provide the programs \fun{\Majority} and \fun{\LeaderElection} which describe protocols solving the respective problems w.h.p. of correctness.

\subsection{Leader Election Protocol (w.h.p.)}
\label{le}

\begin{theorem}
\label{lewhp}
Let $T = \Omega(\log n)$ be fixed arbitrarily. At the end of any $T$-th successive good iteration, \fun{\LeaderElection} has elected a unique agent with a set boolean state variable $L$, w.h.p. 
\end{theorem}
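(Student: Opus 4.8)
The plan is to read off the ideal behaviour of \fun{\LeaderElection} from the good-iteration guarantee, reduce the statement to a clean fact about a coin-flip elimination process, and separately control the uncontrolled part of the execution by a monotonicity argument. Recall that \fun{\LeaderElection} keeps at every agent a boolean \emph{candidate flag} $L$, set for all agents at startup (so the candidate set $\mathcal{L}$ is the whole population); each pass of its outermost loop re-establishes $\mathcal{L}\neq\emptyset$ if it was violated, and then performs one or more coin-flip elimination steps: each agent draws a fresh fair bit $B$, and (by a conditional on ``some agent has $L\wedge B$'' followed by the assignment $L:=L\wedge B$ at all agents) one \emph{elimination step} replaces $\mathcal{L}$ by $\mathcal{L}\cap\mathcal{B}$ when $\mathcal{L}\cap\mathcal{B}\neq\emptyset$, and leaves $\mathcal{L}$ unchanged otherwise, with the bits $\{B_i\}_i$ i.i.d.\ uniform and independent across steps. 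By Theorem~\ref{th:guarantees} and the definition of a good iteration, within each of the $T$ successive good iterations the validity re-establishment and all elimination steps attain exactly this ideal behaviour for every agent; in particular the $T$ good iterations collectively perform $\Omega(\log n)$ ideal elimination steps (one per iteration, or more if the loop body contains a nested bounded loop).

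Second, I would prove the probabilistic core: performing $\Theta(\log n)$ ideal elimination steps from any $\mathcal{L}$ with $1\le|\mathcal{L}|\le n$ yields $|\mathcal{L}|=1$ at the end with probability $1-n^{-\Omega(1)}$. This splits into two regimes handled by Chernoff bounds and a union bound over steps. While $|\mathcal{L}|=m\ge c_1\ln n$, the next size is $\mathrm{Bin}(m,\tfrac12)$, which lies in $(0,\tfrac35 m]$ with probability $1-n^{-\Omega(1)}$, so after $O(\log n)$ such steps one reaches $|\mathcal{L}|<c_1\ln n$. While $2\le|\mathcal{L}|=m\le c_1\ln n$, the size never increases and halves (or better) in one step with probability at least $\tfrac14$, since $\Pr[\mathrm{Bin}(m,\tfrac12)\in[1,\lceil m/2\rceil]]\ge\tfrac14$ for all $m\ge2$; a second Chernoff bound then shows $O(\log n)$ further steps contain the $O(\log\log n)$ halvings needed to reach $|\mathcal{L}|=1$, except with probability $n^{-\Omega(1)}$. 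Hence $T_0=O(\log n)$ ideal steps suffice, and the $T=\Omega(\log n)$ good iterations supply at least that many (indeed, a single good iteration already does if the body contains a nested bounded loop with a sufficiently large constant), so w.h.p.\ $|\mathcal{L}|$ equals $1$ at some point during the $T$ good iterations.

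Third, I would argue persistence, so that $|\mathcal{L}|$ stays at $1$ thereafter (during the remaining good iterations and under any later behaviour). By the first clause of the guaranteed-behaviour property and the shape of \fun{\LeaderElection}, the only operation that can clear $L$ is the assignment $L:=L\wedge B$, executed only inside the conditional branch guarded by the existence of an agent with $L\wedge B$; by the construction of the compilation this branch is entered only when $\mathcal{L}\cap\mathcal{B}\neq\emptyset$ at the time of branching, so after any execution of the assignment by any set of agents the new candidate set still contains $\mathcal{L}\cap\mathcal{B}\neq\emptyset$. Thus $|\mathcal{L}|$ never drops to $0$ once positive (and, as a further safety net, the validity re-establishment at the top of each iteration restores $L$ everywhere should $\mathcal{L}$ ever become empty); in particular a lone candidate is never eliminated --- if it flips heads the assignment keeps $L$ set, and if it flips tails the branch is not taken. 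Combining with the previous step, among the $T$ successive good iterations $|\mathcal{L}|$ reaches $1$ w.h.p.\ and stays $1$, so at the end of the $T$-th good iteration exactly one agent has $L$ set, w.h.p., as claimed.

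I expect the delicate point to be the persistence step rather than the probabilistic core. One must check rigorously that nothing in the initialization phase or in any non-synchronized (bad) iteration can eliminate the last candidate; this rests on the precise semantics of the compiled ``if exists'' primitive (that its ``if'' branch is never taken spuriously when the tested set is empty), together with the guaranteed-behaviour property and the explicit validity re-establishment in the program. The elimination analysis itself is standard; the only care needed there is the crossover to the $|\mathcal{L}|=O(\log n)$ regime, where ``halves with high probability'' is replaced by ``halves with constant probability'' and a second concentration bound is applied.
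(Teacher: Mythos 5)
Your proposal is correct and follows the same skeleton as the paper's proof: restrict attention to the $T$ successive good iterations, observe that each one implements one ideal elimination step $\mathcal{L}\mapsto\mathcal{L}\cap\mathcal{B}$ (kept unchanged when the intersection is empty, and reset to the full population in the iteration after $\mathcal{L}$ is found empty), show that $O(\log n)$ such steps shrink $|\mathcal{L}|$ to $1$ w.h.p., and note that a lone candidate can never be eliminated because the assignment $L:=D$ is guarded by the nonemptiness test on $D$. The one genuine difference is the probabilistic core: the paper computes $\mathbb{E}[\ell_{i+1}\mid\ell_i]=\ell_i/2+2^{-\ell_i}\ell_i\le\tfrac34\ell_i$ for $\ell_i\ge 2$ and invokes the multiplicative drift theorem of Doerr, Johannsen and Winzen to get $\ell_t=1$ w.h.p.\ after $t=c\log n$ steps, whereas you run a self-contained two-regime Chernoff argument (w.h.p.\ geometric decay while $|\mathcal{L}|\ge c_1\ln n$, then constant-probability halving with a second concentration bound below that threshold). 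Your route is more elementary and avoids the citation, at the cost of a longer case analysis; the drift theorem buys a one-line conclusion directly from the conditional-expectation bound. Your more explicit treatment of persistence (including the observation that the existence-guarded branch is never taken spuriously, so the last candidate survives even outside good iterations) is a sound elaboration of what the paper compresses into ``$\ell_T=\ell_t=1$ for any $T\ge t$.''
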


\begin{figure}[!!!ht]
\begin{protocol}
\PROTOCOL{\LeaderElection}
\VAR{\OUTPUT{\INIT{L}{\on}}}

\THREAD{\Main}{L}
    \LOCAL{\INIT{D}{\off}, \INIT{F}{\on}}
    \REPEAT
        \IFNONEMPTY{L}
            \GETS{F}{\{\on, \off\} $chosen uniformly at random$}
            \GETS{D}{L \wedge F}
            \IFNONEMPTY{D}
                \GETS{L}{D}
        \ELSE
            \GETS{L}{\on}
\end{protocol}
\end{figure}

\begin{proof}
Consider the first good iteration of the Main thread. Assume $\cal L$ is nonempty. Denote by $\ell_0,\ell_1,\ldots$ the size of $\cal L$ in the $i$-th good iteration after the first. We have $\mathbb{E}[\ell_{i+1} | \ell_i] = \ell_i/2 + 2^{-\ell_i} \cdot \ell_i$, so for $\ell_i \ge 2$ it follows that $\mathbb{E}[\ell_{i+1} | \ell_i] \le \frac34 \ell_i$. By the multiplicative drift theorem~\cite{DBLP:journals/algorithmica/DoerrJW12}, it follows that for some $t = c \log n$ for large enough constant $c$, $\ell_t = 1$ with high probability. We then also have $\ell_T = \ell_t = 1$ for any $T \geq t$.

If $\cal L$ is empty at the beginning of the first good iteration, then in the next iteration we have $|{\cal L}| = n$, and subsequently the same analysis applies.
\end{proof}

It follows that the convergence time of \fun{LeaderElection} is $O(\log^2 n)$ parallel rounds w.h.p. under the compilation scheme from Theorem~\ref{th:guarantees}(ii)(a), as each iteration of the protocol with no nested loops can be realized in $O(\log n)$ parallel rounds.

\subsection{Majority Protocol (w.h.p.)}
\label{maj}

\begin{theorem}
\label{th:majority_runtime}
Let $T = \Omega(\log n)$ be fixed arbitrarily. At the end of any $T$-th successive good iteration, \fun{\Majority} has computed in the boolean state variable $Y_A$ a correct answer to the majority problem on sets $\cal A$ and $\cal B$, w.h.p.
\end{theorem}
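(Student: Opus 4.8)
The plan is to mirror the argument used for Theorem~\ref{lewhp}: reduce the statement to a sequential claim about \fun{\Majority} via Theorem~\ref{th:guarantees}, then exploit an exact conservation invariant of the dynamics together with the fact that a correct output, once written, is never disturbed.

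First I would fix the first good iteration of the \fun{\Main} thread and argue forward from it. By Theorem~\ref{th:guarantees}(i) state variables are touched only as prescribed by the code; and inside a good iteration every \GETS{X}{\Sigma} realizes its assignment, every \IFNONEMPTY{} branches on the true population-wide existence of a matching agent, and every \ASYNC\ block genuinely runs its ruleset under a \emph{fair} random scheduler for at least its stated number of rounds (definitions of synchronized and good iteration). So within good iterations \fun{\Majority} \emph{is} its sequential program, and the only residual randomness is that of the \ASYNC\ blocks over their $\Theta(\log n)$-round fair windows. The two recommended conventions for using the framework are exactly what make ``first good iteration onward'' reasoning legitimate: \fun{\Majority} re-derives its working variables from the untouched input tokens $\mathcal{A},\mathcal{B}$ at the top of each iteration (so the uncontrolled prefix of the run leaves no lasting damage), and $Y_A$ is (re)written only from an already-stabilized configuration (so once it is correct it stays correct) --- which is what promotes the claim from ``some good iteration'' to ``every $T$-th good iteration, $T=\Omega(\log n)$''.

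Next I would pin down what the sequential program computes, taking it to be the natural finite-state rendering of the AAG majority dynamics~\cite{DBLP:conf/soda/AlistarhAG18} indicated in the overview: each agent carries a bounded signed value (essentially a digit in $\{+1,0,-1\}$ plus a committed-output flag), and the synchronized \REPEAT[c\ln n]\ loop plays the role of the $\Theta(\log n)$ weight \emph{levels} that in~\cite{DBLP:conf/soda/AlistarhAG18} were stored in the agent's state --- which is precisely what lets us drop from $O(\log n)$ to $O(1)$ states. Every ruleset involved (cancellation of opposite digits, merging two equal digits up a level, spreading a digit downward onto a $0$-agent) is \emph{exactly mass-conserving}, so the signed population total is invariant and equals $|\mathcal{A}|-|\mathcal{B}|$ from the first good iteration on (validity of the working variables ensured by the reset convention). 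Assume WLOG $|\mathcal{A}|>|\mathcal{B}|$. I would then run the AAG analysis inside the fair windows supplied by the \ASYNC\ instructions: in each of the $\Theta(\log n)$ level sweeps, the estimate ``a reaction enabled by $\Omega(n)$ agents saturates the population in $O(\log n)$ fair rounds w.h.p.'' (the rumor-spreading/coupon argument used throughout the clock analysis) together with the conservation law drives the representation towards the flat configuration carrying the sign of $|\mathcal{A}|-|\mathcal{B}|$; equivalently one packages this as a potential $\Phi_i$ (e.g.\ the number of surviving minority digits, or of agents not yet at the flat value) with $\mathbb{E}[\Phi_{i+1}\mid\Phi_i]\le\tfrac34\Phi_i$ while $\Phi_i>0$, and invokes the multiplicative drift theorem~\cite{DBLP:journals/algorithmica/DoerrJW12} exactly as in Theorem~\ref{lewhp} to get $\Phi_t=0$ w.h.p.\ for some $t=\Theta(\log n)$, with $\Phi$ non-increasing afterwards. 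Once the representation is flat, every opposite digit has cancelled and the read-off phase (an \IFNONEMPTY{} on the surviving sign, broadcast through the population) sets $Y_A$ to the correct value for every agent; a union bound over the $O(\log n)$ good iterations and the $O(\log n)$ \ASYNC\ blocks in each, each failing with probability $n^{-\Omega(1)}$, yields overall success probability $1-n^{-\Omega(1)}$.

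The step I expect to be the main obstacle is the small-gap regime, $|\mathcal{A}|-|\mathcal{B}|$ as small as $1$. There correctness of the final sign rests entirely on the reactions being \emph{exactly} mass-conserving and on the synchronized level loop being executed \emph{in full} --- so that the last unit of signed mass is never rounded away by a truncated sweep --- and detection must be unambiguous, i.e.\ no spurious minority digit may survive into the broadcast. This is exactly where the ``good iteration'' guarantees (all agents on the same line of code, every internal \key{repeat}/\key{execute} run for its full specified length) are used essentially, and where the epidemic lemma and the constant-factor drop of $\Phi$ must be proven --- with the claimed $n^{-\Omega(1)}$ failure probability over a $\Theta(\log n)$-round fair window --- rather than merely quoted; the large-gap case and the remaining bookkeeping are routine given Theorem~\ref{th:guarantees} and the template of Theorem~\ref{lewhp}.
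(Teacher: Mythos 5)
Your reduction to the sequential semantics via Theorem~\ref{th:guarantees}, and your use of the two programming conventions to promote ``some good iteration'' to ``every $T$-th good iteration,'' match what the paper does. The gap is in the core analysis of the dynamics: you are analyzing a different protocol from the one in the pseudocode. \fun{\Majority} stores no levels in the agent state and has no ``merge two equal digits up a level'' rule; its inner loop consists only of (a) a cancellation phase between $\mathcal{A}^*$ and $\mathcal{B}^*$ and (b) a doubling phase in which each surviving marked agent recruits exactly one blank agent (the flag $K$ enforces ``exactly once'' w.h.p.). Consequently both of your key claims fail for this protocol. First, the signed total $|\mathcal{B}^*|-|\mathcal{A}^*|$ is \emph{not} invariant: it doubles on every pass of the inner loop --- the design amplifies the gap rather than conserving it (conservation holds only in the weighted sense $2^{-i}(b_i-a_i)=\mathrm{const}$, a weighting you never introduce). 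Second, your proposed potential (the number of surviving minority agents) does not satisfy $\mathbb{E}[\Phi_{i+1}\mid\Phi_i]\le\frac{3}{4}\Phi_i$: the minority count \emph{grows} by a factor of $2$ per inner iteration for up to $\Theta(\log n)$ iterations and then collapses to $0$ in a single cancellation phase, so the multiplicative drift theorem is the wrong tool here.

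The paper's actual argument is a dichotomy on each inner iteration $i$, with $a_i',b_i'$ denoting the sizes of $\mathcal{A}^*,\mathcal{B}^*$ after the cancellation phase and $|\mathcal{A}|<|\mathcal{B}|$ w.l.o.g. If $b_i'\ge n/6$, then during the $c\ln n$ fair rounds of the cancellation ruleset every $\mathcal{A}^*$-agent meets a $\mathcal{B}^*$-agent w.h.p., so $a_i'=0$ and the minority is annihilated. Otherwise $a_i',b_i'\le n/6$, at least $n/3$ agents are blank, each marked agent recruits exactly one blank agent w.h.p., hence $b_{i+1}-a_{i+1}=2(b_i'-a_i')=2(b_i-a_i)$; since $b_i'-a_i'=2^{i-1}(b_1-a_1)\ge 2^{i-1}$ cannot exceed $n$, the first case must occur within $O(\log n)$ inner iterations --- which is exactly how the gap-$1$ case is handled. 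You would need to replace your conservation-plus-drift argument by this doubling-until-saturation argument (or set up the correct weighted invariant together with the saturation threshold) for the proof to go through; the final read-off via the \key{if exists} checks then matches the paper.
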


\begin{figure}[!!!ht]
\begin{protocol}
\PROTOCOL{\Majority}
\VAR{\OUTPUT{Y_A}, \INPUT{A, B}}

\THREAD{\Main}{Y_A, \READONLY{A,B}}
    \LOCAL{\INIT{A^*}{\off}, \INIT{B^*}{\off}, \INIT{K}{\off}}
    \REPEAT
        \GETS{A^*}{A}
        \GETS{B^*}{B}
        \REPEAT[c \ln n]
            \ASYNC[c \ln n]
                \RULE{A^*}{B^*}{\neg A^*}{\neg B^*}
            \GETS{K}{\off}
            \ASYNC[c \ln n]
                \RULE{A^* \wedge \neg K}{\neg A^* \wedge \neg B^*}{A^* \wedge K}{A^* \wedge K}
                \RULE{B^* \wedge \neg K}{\neg A^* \wedge \neg B^*}{B^* \wedge K}{B^* \wedge K}
        \IFNONEMPTY{A^*}
            \GETS{Y_A}{\on}
        \IFNONEMPTY{B^*}
            \GETS{Y_A}{\off}
\end{protocol}
\end{figure}

\begin{proof}
We follow the steps of \cite{DBLP:conf/soda/AlistarhAG18}, adapting the proof to our setting.
Suppose w.l.o.g.\ that initially $|\set A| < |\set B|$.
Thread \fun{\Trim} maintains the invariant $|\set A| - |\set B|$. Consider the first good iteration. All of the following properties hold with high probability, conditioned on $c$ being large enough constant. ${A}^*$ and ${B}^*$ are initialized so that $|\mathcal{A}^*| < |\mathcal{B}^*|$. Consider a single loop-iteration of the loop in line 9 (that is, single pass through lines 10--15). Denote by $a_i$ the size of $\mathcal{A}^*$ at the start of $i$-th loop-iteration, by $a'_i$ the size of $\mathcal{A}^*$ in the $i$-th loop-iteration at line 12, and denote $b_i$ and $b'_i$ in the same manner with respect to $\mathcal{B}^*$. 

We observe the following: if for some $i$, in the $i$-th loop-iteration there is  $b'_i \ge \frac16n$, then in that loop-iteration, in lines 10--12 $|\mathcal{B}^*|$ is always at least $\frac16n$, and any $x \in \mathcal{A}^*$ has at least $1 - (1 - \frac{1}{6n})^{c n \ln n} = 1 - n^{-c/6}$ probability of triggering rule from line 11. As a consequence, $a'_{i}=0$.

Now, assume that for some $i$ there is $a'_i,b'_i \le \frac16n$. In such case $a_{i+1} \le 2a'_i$ and $b_{i+1} \le 2b'_i$ and there are always at least $\frac13n$ nodes that do not belong to $\mathcal{A}^*$, $\mathcal{B}^*$.
Additionally, any node that at line 12 belonged to $\mathcal{A}^*$ has probability at each step least $\frac{1}{3n}$ of triggering rule from line 14. Thus with high probability, it triggers it exactly once. The same reasoning holds for $\mathcal{B}^*$, and as a consequence, $a_{i+1} = 2a'_i$ and $b_{i+1} = 2b'_i$.

It now follows that for some $t \le O(\log n)$, it holds that $a'_t= 0$. Otherwise, we would have that for any $i \le t$, we have $b'_i \le \frac16n$. Thus $b'_i - a'_i = b_i - a_i = 2^{i-1} (b_1 - a_1)$, and $b'_t - a'_t > n$, a contradiction.

It follows that after leaving loop from lines 9--15, $\mathcal{A}^* = \emptyset$ and $\mathcal{B}^* \not= \emptyset$, thus $\set Y_A = \emptyset$ holds.
\end{proof}

It follows that the convergence time of \fun{LeaderElection} is $O(\log^3 n)$ parallel rounds w.h.p. under the compilation scheme from Theorem~\ref{th:guarantees}(ii)(a), since each iteration of the protocol with a single nested loop can be realized in $O(\log^2 n)$ parallel rounds.

\section{Compilation Process of the Sequential Code}\label{sec:compilation}

The proposed language for expressing sequential code is first precompiled into a subset of the language with a simple tree grammar. All leaf nodes take the form of \ASYNC[c \ln n]{\ [ruleset]} instructions with provided rulesets, and all internal nodes take the form of \REPEAT[c \ln n]{\ [ChildNode1, ChildNode2, \ldots]} instructions (with the exception of the root node, which is typically given by a \REPEAT{} loop). (If different values of $c$ were specified for loopcounts in the description of the code, w.l.o.g.\ we understand $c$ to be the maximum among them, and use this value of $c$ throughout the code.)

We describe below the process of elimination of the other language constructs. We start by precompiling all assignment operation by replacing them with a simple ruleset, and then evaluate all conditions and eliminate the branching structure from the program.

\paragraph{Assignments.}
A naive but correct way of implementing an assignment operation \GETS{X}{\Sigma}, for some boolean formula $\Sigma$ on local state variables, is through an insertion of code in the form of loops shown in Fig.~\ref{fig:assignment}, using an auxiliary internal trigger state variable $K_{(\ID\ )}$ assigned to the line number $\#$ in which the instruction is placed. The assignment operation ensures that in any circumstances, if the value of $X$ changes, then $X$ may become set only when $\Sigma$ is set and $X$ may become unset only when $\Sigma$ is unset. Moreover, under correct operation, the trigger $K_{(\ID\ )}$ will be set for each agent at the beginning of the second loop w.h.p., and unset while performing the assignment of $X$ for each agent; thus, the assignment will be performed at most on each agent, and exactly once w.h.p. (where the high probability may be made arbitrarily high through a careful choice of $c$).

\begin{figure}[h]
\begin{protocol}
\ASYNC[c \ln n]
    \RULE{\neg K_{(\ID )}}{.}{K_{(\ID )}}{.}

\ASYNC[c \ln n]
    \RULE{\Sigma \wedge K_{(\ID )}}{.}{X \wedge \neg K_{(\ID )}}{.}
    \RULE{\neg\Sigma \wedge K_{(\ID )}}{.}{\neg X \wedge \neg K_{(\ID )}}{.}
\end{protocol}
\vspace{-5mm}
\caption{Precompilation of instruction ``\GETS{X}{\Sigma}'' placed in line number $\#$.}\label{fig:assignment}
\end{figure}

\paragraph{Conditions and branching.} A conditional statement following instruction ``\IFNONEMPTY{X}'' has its condition evaluated following an insertion of code presented in Fig.~\ref{fig:ifnonempty}, using an auxiliary flag variable $Z_{(\ID\ )}$ assigned to the line number $\#$ in which the instruction is placed. In the code from Fig.~\ref{fig:ifnonempty}, in the first line flag $Z_{(\ID\ )}$ is unset (w.h.p.) for all agents (and never become set for any agent). In the loop which follows, an epidemic process is triggered with source $X$ on flags $Z_{(\ID\ )}$. In this phase, a flag $Z_{(\ID\ )}$ may become set for an agent only if it is already set at the same time for at least one other agent in the population, or if set $\mathcal X$ is nonempty, i.e., if $\mathcal{Z}_{(\ID\ )} = \emptyset$ at some time, then $\mathcal{Z}_{(\ID\ )}$ may become nonempty only if $\mathcal X$ is nonempty at some point. Moreover, if set $\mathcal X$ is nonempty throughout the loop, then at the end of the loop, flag  $Z_{(\ID\ )}$ will be set for all agents w.h.p.

We note the behavior of the code below when set $\mathcal X$ becomes permanently empty from some time $t_0$ onwards. Then, at the end of the first correct execution of the code evaluating ``\IFNONEMPTY{X}'', we will have flag $Z_{(\ID\ )} = \off$ for all agents. Suppose this happens at time $t_1 > t_0$. Then, for all $t > t_1$, no flag $Z_{(\ID\ )}$ will ever be set again, regardless of the correctness of the execution of the protocol. This will ensure the desired property of correctness of operation.

\begin{figure}[h]
\begin{protocol}
\VAR{\OUTPUT{Z_{(\ID )}}}

\GETS{Z_{(\ID )}}{\off}
\ASYNC[c \ln n]
    \RULE{X}{.}{Z_{(\ID )}}{Z_{(\ID )}}
    \RULE{Z_{(\ID )}}{.}{Z_{(\ID )}}{Z_{(\ID )}}
\end{protocol}
\vspace{-5mm}
\caption{Precompilation of instruction ``\IFNONEMPTY{X}'' placed in line number $\#$. }\label{fig:ifnonempty}
\end{figure}

After running the evaluation condition, intuitively, we require that the agent executes operations within the block of the corresponding ``\IF'' branch only if $Z_{(\ID\ )} = \on$, and executes operations within the block of the ``\ELSE'' branch only if $Z_{(\ID\ )} = \off$. The precompiled form of the code does \emph{not}, however, have branching structure. We perform a standard operation of compacting rulesets of the ``\IF'' and ``\ELSE'' blocks into a single ruleset, adding a requirement on $Z_{(\ID\ )}$ or $\neg Z_{(\ID\ )}$, respectively, to the conditions requiring for triggering rules from the respective block. Formally, this is done in a bottom-up manner, and when compacting the ``\IF'' and ``\ELSE'' blocks corresponding to a condition in line $\#$, we assume that all branching instructions contained within these two blocks of code have already been eliminated. Thus, the only instructions which remain in these blocks of code are assumed to already have the required structure of a tree program with \ASYNC[c \ln n]{\ [ruleset]} instructions at leaf nodes and \REPEAT[c \ln n]{\ [ChildNode1, ChildNode2, \ldots]} instructions at internal nodes. We denote the trees corresponding to the respective blocks $T_{if (\ID\ )}$ and $T_{else (\ID\ )}$. We construct one single tree  $T_{(\ID\ )}$ out of them as follows. First, we augment each of the trees until they are isomorphic by inserting artificial repeat loops and nil instructions (empty rulesets) in the artificially created leaves. The tree  $T_{(\ID\ )}$ is set to be isomorphic to each of these augmented trees. There is now a one-to-one matching between the leaves of the trees $T_{if (\ID\ )}$ and $T_{else (\ID\ )}$. Now, for the $i$-th leaf of $T_{(\ID\ )}$ we define a ruleset $R^i$ by taking a union of modified rules from rulesets in the $i$-th leaf of $T_{if (\ID\ )}$ (ruleset $R^i_{if}$) and from $T_{else (\ID\ )}$ (ruleset $R^i_{else}$) as follows:
\begin{itemize}
\item For each rule of the form \RULE{\Sigma_1}{\Sigma_2}{\Sigma_3}{\Sigma_4} in ruleset $R^i_{if}$, we put in ruleset  $R^i$ the rule \RULE{Z_{(\ID\ )} \wedge \Sigma_1}{Z_{(\ID\ )} \wedge \Sigma_2}{\Sigma_3}{\Sigma_4}.
\item For each rule of the form \RULE{\Sigma_1}{\Sigma_2}{\Sigma_3}{\Sigma_4} in ruleset $R^i_{else}$, we put in ruleset  $R^i$ the rule \RULE{\neg Z_{(\ID\ )} \wedge \Sigma_1}{\neg Z_{(\ID\ )} \wedge \Sigma_2}{\Sigma_3}{\Sigma_4}.
\end{itemize}
The above is repeated until all branching elements have been eliminated from the code.

\paragraph{Precompilation result.}

At the end of the process, we obtain a tree $T$ with (loop) depth $l_{\max}$ and (loop) width $w_{\max}$. Each internal node represents a loop, and each leaf node a ruleset, which must be repeated for at least $c \ln n$ rounds. Without loss of generality, we assume that all internal nodes have the same number of children $w_{\max}$; assuming this corresponds to padding the tree $T$ into a complete ($w_{\max}$)-ary tree of depth $l_{\max}$, inserting artificial repeat loops and nil instructions (empty rulesets) in the artificially created leaves.

The following Section describes how to convert the obtained structure into specific rules triggered in a way synchronized by a hierarchy of phase clocks (see Subsection~\ref{sec:compiledrules} for the compilation mechanism itself).

\section{Construction of The Phase-Clock Hierarchy}\label{sec:hierarchy}

\subsection{Clocks with Arbitrary Constant Period}\label{sec:clocks}

In what follows, lengths of time intervals are expressed in parallel rounds.

A \emph{clock} is a protocol with states ($C_{0}, \ldots, C_{m-1}$), for some positive integer $m$ called its \emph{module}, such that in the course of execution each agent performs a sequence of state transitions, moving it from some state $C_i$, $i \in \{0,\ldots, m-1\}$, to state $C_{(i+1) \bmod m}$. The time between two successive moments when an agent enters state $C_0$ is called its \emph{clock cycle}. We say that a clock is \emph{operating correctly at tick length} $r \geq \ln n$ during a time interval $I$ if we can find an ordered sequence of subintervals $I_i^j \subseteq I$, $i \in \{0,\ldots, m-1\}$, $j \in \{1,\ldots j_{\max}\}$ for some integer $j_{\max}$, called \emph{ticks}, such that:
\begin{itemize}
\item $\max I_{i}^j < \min I_{i+1}^j$ for all indices $j$ and $i < m-1$, and $\max I_{m-1}^j < \min I_{0}^{j+1}$ for all indices $j < j_{\max}$
\item Throughout each tick $I_i^j$, all agents in the population are in the same state $C_i$,
\item All ticks have length $|I_i^j| \in [a r, b r]$, where $a > 0$ can be fixed arbitrarily, and $b > a$ is a protocol-specific constant depending on the choice of $a$;
\item Adjacent ticks are not too far apart: the set of intervals between ticks $I \setminus \bigcup_{i,j} I_i^j$ does not contain an interval of length more than $b r$.
\end{itemize}
(We remark that the tick length of a correctly operating clock asymptotically determines the cycle length of all of the agents.)
All algorithms in this paper are self-contained, except that we reuse clock routines known from the literature. We are making use of clock protocols which, after a brief initialization phase operate correctly over long intervals of time (i.e., over a sufficiently large polylogarithmic or polynomial number of rounds), w.h.p.

A clock protocol $C'$ with a module $m' \geq 3$ (where in what follows we will use a clock with $m'=3$) and states ($C'_{0}, \ldots, C'_{m'-1}$), operating with ticks of length $|I_i^j| \in [a' r, b' r]$, can be used to provide a clock with some longer module $m > m'$ and states ($C_{0}, \ldots, C_{m-1}$). We do this by composing clock protocol $C'$ with the following set of rules:

\RULE{C'_0}{C'_0}{C'_0 \wedge T}{C'_0}

\RULE{C'_1  \wedge T \wedge C_i}{C'_1}{C'_1 \wedge \neg T \wedge C_{(i+1) \bmod m} \wedge \neg C_i}{C'_1},\quad for all $i \in \{0, \ldots, m-1\}$

\RULE{C'_2 \wedge C_i}{C'_2 \wedge C_j}{C'_2 \wedge C_i}{C'_2 \wedge C_i \wedge \neg C_j},\quad for all $j < i$, $i,j \in \{0, \ldots, m-1\}$.

We observe that if clock $C'$ operates correctly with constants $a', b'$ during time interval $I$ of given length, $|I| = O(\poly(n))$, where $a'$ is chosen to be sufficiently large, then clock $C$ operates correctly during this time interval with constants $a = a' (m'-1)$ and $b = c b' m'$, w.h.p., for some choice of constant $c$ depending on the required probability of correctness.  Indeed, note that the first two rules of the composition activate a trigger $T$, which is activated for each agent at most once per cycle of clock $C'$, and is activated for all agents exactly once in all cycles of clock $C'$, w.h.p.\ (because the length of each tick of clock $C'$ is at least $a' \ln n$, for a sufficiently large choice of constant $a'$). An agent with a set trigger $T$ in a given clock cycle advances its state for clock $C$ within the first $O(\ln n)$ rounds of tick $C'_1$ of the current clock cycle, following the second rule. The third rule ensures that during tick $C'_2$, all agents w.h.p. agree on the same state of clock $C$, chosen as the maximum of all states of $C$ over all agents in the population. Note that once agreement on some state $C_i$ is reached during a given cycle of clock $C'$, then such agreement will be retained throughout all future ticks $C'_2$ of clock $C'$ during interval $I$ w.h.p., with the state of clock $C$ in the population advancing from some $C_i$ to $C_{(i+1) \bmod m}$ in each cycle of clock $C'$.

Thus, clock $C$ is ``slower'' by a factor of $m'$ with respect to clock $C'$. In our construction, we will use as the base clock the simple clock $C'$ with $m'=3$ described in Section~\ref{sec:baseclock}, 
and use it to generate a clock $C$ with a significantly larger module $m$, where $m$ will depend on the sequential code being executed as $O(w_{\max})$.

\subsection{Design of the Base Clock}\label{sec:baseclock}

We first proceed to describe the design of the base modulo-3 phase clock protocol, which meets the requirements for the clock hierarchy laid out in Section~\ref{sec:clocks}. All other protocols are then composed with this clock. We base the clock on the design of the oscillator protocol $P_o$ described in~\cite{DBLP:journals/corr/DudekK17}. We recall briefly its basic properties.

Protocol $P_o$ uses $7$ states: six states of the oscillator, called $A_i^+$ and $A_i^{++}$, for~$i\in\{1,2,3\}$, and an optional control (source) state, denoted by $X$. Each agent of the oscillator holds one of the three \emph{species} $A_i := A_{i}^{+} \vee A_{i}^{++}$, for $i\in\{1,2,3\}$. (The protocol $P_o$ is itself inspired by the so-called rock-paper-scissors oscillator dynamics, which is defined by the simple predator-prey rule ``\RULE{A_i}{A_{(i-1)\bmod 3}}{A_i}{A_i}'', for $i=1,2,3$; in protocol $P_o$, this rule works with slightly different probability for the states $A_i^+$ and $A_i^{++}$ within species $A_i$.). We denote $a_{\min}= \min_{i=1,2,3}|\mathcal{A}_i|$.
The control state converts any encountered agent of some species $A_i$ to an agent of a uniformly random species.

The theorem below is obtained by carefully retracing the arguments in~\cite{DBLP:journals/corr/DudekK17}[Section 6.4,Section 7.1]\footnote{We remark that the proof of claim~(i) of the Theorem essentially follows from \cite{DBLP:journals/corr/DudekK17}[Section 6.4; Lemmas 1-8], but the analysis therein was conducted for $|\mathcal X|=0$. Assuming $\mathcal X \leq n^{1-\eps}$ instead adds additional terms to the considered potentials without changing the proof structure}. 
\begin{theorem}[variant of analysis in \cite{DBLP:journals/corr/DudekK17}]
\label{th:variantofclock}
Fix $0 < \eps < 1/2$. If $|\mathcal{X}| \in [1, n^{1-\eps}]$ at all time steps starting from some round $t_0$, then regardless of the configuration at time $t_0$, the following properties hold:
\begin{itemize}
\item[(i)] the system reaches a configuration with $a_{\min} < n^{1- \eps/2}$ at some time $t_1 > t_0$, where $t_1 - t_0= O(\log n)$ rounds in expectation.
\item[(ii)] for any moment of time $t \in [t_1, t_1 +e^{\Omega(n)}]$ we have $a_{\min} < n^{1-\eps/3}$ w.h.p., and for each species index $i=1,2,3$, there exists a time step in the interval of rounds $[t, t+O(\log n)]$ when at least $n - O(n^{1-3\eps})$ agents belong to species $A_i$. Moreover, if $A_i$ is at the current time step the species in the population held by at least $n - O(n^{1-3\eps})$ agents, then the next species with this property will be $A_{i+1}$, w.h.p. 
\end{itemize}
The above properties hold under the assumption of an asynchronous fair scheduler or a random-matching fair synchronous scheduler.\qed
\end{theorem}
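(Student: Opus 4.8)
The plan is to derive Theorem~\ref{th:variantofclock} by reusing, essentially verbatim, the potential-function analysis of the oscillator $P_o$ from \cite{DBLP:journals/corr/DudekK17}[Sections~6.4, 7.1], and showing that the only modification required to accommodate a control population of size $|\mathcal{X}|\in[1,n^{1-\eps}]$ (rather than $|\mathcal{X}|=0$) is the addition of lower-order error terms that do not affect the drift estimates at the scales of interest. Concretely, for part~(i) I would recall the family of potentials used in \cite{DBLP:journals/corr/DudekK17} to track how quickly the system leaves the central region of the simplex $\{(|\mathcal{A}_1|,|\mathcal{A}_2|,|\mathcal{A}_3|)\}$; the rock-paper-scissors dynamics is repelled from the interior fixed point $(n/3,n/3,n/3)$, and the cited argument shows $a_{\min}$ drops below $n^{1-\eps/2}$ within $O(\log n)$ expected rounds once one is tracking the oscillation. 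The effect of the control state is that in each interaction an agent in $\mathcal{X}$ (there are at most $n^{1-\eps}$ of them) may reset a random species assignment; this perturbs the expected one-step change of each species count by at most $O(|\mathcal{X}|/n) = O(n^{-\eps})$ relative to the value without control, which is dominated by the $\Omega(a_{\min}/n)$-type drift away from the center whenever $a_{\min}$ is within a $\poly(n)$-factor of $n$. So the drift argument goes through with the constants only mildly affected, and the expected hitting-time bound is preserved.

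For part~(ii) I would invoke the machinery of \cite{DBLP:journals/corr/DudekK17}[Section~7.1]: once the system has left the center (i.e.\ $a_{\min}$ is polynomially smaller than $n$), it enters a self-stabilizing oscillatory regime in which, cyclically, almost the entire population collapses onto one species $A_i$, then onto $A_{i+1}$, and so on. The key quantitative inputs are (a) a concentration statement — once the trajectory is at distance $\geq n^{-0.5+\Omega(1)}$ from the singularities/fixed points, Azuma's inequality (as in \cite{DBLP:journals/corr/DudekK17}[Lemma~1 and subsequent applications]) tracks it along the mean-field ODE for $\poly(n)$ rounds with only $n^{-\Omega(1)}$ deviation; and (b) the fact that the attracting cycle of the perturbed ODE still passes, in each period, within $O(n^{-3\eps})$ (in normalized coordinates) of each "almost-monochromatic" corner, and visits the corners in cyclic order $A_1\to A_2\to A_3\to A_1$. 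Again the presence of $\mathcal{X}$ only shifts the ODE's vector field by an $O(n^{-\eps})$-size term, which moves the limit cycle by a comparably small amount and cannot change its topological structure (it still encircles the repelling interior fixed point), so the ordering and the $n-O(n^{1-3\eps})$ population-collapse bounds survive; the $e^{\Omega(n)}$ time horizon comes from the fact that escaping the stable cycle requires crossing an $\Omega(1)$-width region where concentration gives an $e^{-\Omega(n)}$ failure probability per step, union-bounded over $e^{\Omega(n)}$ steps. Finally, the scheduler-robustness claim follows because all of the above is phrased through the mean-field ODE and martingale concentration, which are insensitive to whether one activation per step is an ordered pair (asynchronous) or a random matching (synchronous) — the two differ only by an $O(1/n)$ rescaling of time and by independence structure that is already handled by the Azuma estimates.

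The main obstacle I anticipate is not any single new inequality but the bookkeeping of verifying that every potential, drift bound, and concentration event in \cite{DBLP:journals/corr/DudekK17} remains valid after adding the $|\mathcal{X}|\le n^{1-\eps}$ control terms — in particular, checking that no step of the original analysis implicitly used $|\mathcal{X}|=0$ in a way that fails for $|\mathcal{X}|$ as large as $n^{1-\eps}$ (for instance, the choice $\eps<1/2$ is presumably exactly what is needed to keep the control perturbation below the $n^{-0.5+\Omega(1)}$ concentration threshold, and the $n^{1-3\eps}$ and $n^{1-\eps/2}$, $n^{1-\eps/3}$ exponents must be traced to ensure consistency). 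This is why the theorem is stated as a "variant of analysis in \cite{DBLP:journals/corr/DudekK17}" and the detailed re-derivation is deferred; the honest version of this proof is a careful diff against the cited sections rather than a self-contained argument.
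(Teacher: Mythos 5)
Your proposal takes essentially the same approach as the paper, which in fact offers even less detail than you do: the paper's entire justification is a footnote stating that claim~(i) follows from \cite{DBLP:journals/corr/DudekK17}[Section 6.4, Lemmas 1--8], that the original analysis assumed $|\mathcal{X}|=0$, and that assuming $|\mathcal{X}|\leq n^{1-\eps}$ merely ``adds additional terms to the considered potentials without changing the proof structure.'' Your ``careful diff against the cited sections,'' with the $O(|\mathcal{X}|/n)=O(n^{-\eps})$ perturbation bound and the mean-field/Azuma machinery for part~(ii) and the scheduler-robustness claim, is exactly the argument the paper is gesturing at.
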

Condition (ii) in the above theorem characterizes the oscillatory behavior of the protocol once set $\mathcal{X}$ is small enough (but non-empty), after a short period of convergence given by condition (i). We remark that condition (i) holds because $\mathcal {X}$ is not too large, whereas condition (ii) requires $\mathcal X$ to be both strictly positive and not too large. Under these assumptions, each species then regularly switches from being the smallest (held by at most $O(n^{c'})$ agents where $c' := 1 - \eps/3$) to being the largest (held by all but $o(n)$ agents), with the largest species alternating in the cyclic order $\ldots \to A_1 \to A_2 \to A_3 \to A_1 \ldots $, with each cycle and each phase in it taking $\Theta(\log n)$ steps. This oscillatory effect provides the most important components of a clock, but does not yet implement the separation between ticks required in Section~\ref{sec:clocks}. 

We are now ready to build a phase clock.
Consider protocol $P_o$ composed with the following ruleset on the states $C'_s$, where $s\in \{0,\ldots, 3k-1\}$, and $k = \Omega(1/c')$ is a sufficiently large constant:

\RULE{C'_{s}}{A_{i+1}}{C'_{(s+1)\bmod 3k} \wedge \neg C'_{s}}{A_{i+1}}

\RULE{C'_{s}}{\neg A_{i+1}}{C'_{k\cdot i}\wedge \neg C'_{s}}{\neg A_{i+1}}

for $i = \lfloor s/k \rfloor  \in\{0,1,2\}$. We call such protocol a \emph{clock controlled by an external signal} $\mathcal{X}$. We add an auxiliary states $C_i := \bigvee_{r \in\{1,\ldots, k\}}C'_{ki+r}$, for $i=1, 2, 3$. The boolean formulas $C_1, C_2, C_3$ (which can also be made explicitly into state variables) now provide the required interface to $3$-state clock. We denote this composition of protocols as $C_o$.

\begin{theorem}
Assume protocol $C_o$ is initialized so that $a_{\min} < n/10$ and that there is time $t_0,t_1$ such that for any $t \in [t_0,t_1]$ there is $0 < |\mathcal{X}| < n^c$ for some constant $c<1$. Then $C_o$ is a clock operating correctly during a time interval $[t_0,t_1]$, w.h.p.
\end{theorem}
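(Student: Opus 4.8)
The plan is to verify directly the four defining conditions of a clock operating correctly at some tick length $r=\Theta(\log n)$, where the ticks are chosen to be the maximal time intervals during which all agents sit in a common counter state $C'_{km}$. I would split the argument into three stages. \emph{Stage 1: the oscillator.} Fix $\eps\in(0,1/2)$ with $n^{1-\eps}\ge n^{c}$, which is possible since $c<1$; then the hypothesis $0<|\mathcal X|<n^{c}$ on $[t_0,t_1]$ implies $|\mathcal X|\in[1,n^{1-\eps}]$ there, and $|t_1-t_0|=O(\poly n)\ll e^{\Omega(n)}$, so Theorem~\ref{th:variantofclock} applies. Using the initialization $a_{\min}<n/10$ to keep the oscillator away from its central fixed point, part~(i) gives $a_{\min}<n^{1-\eps/2}$ within $O(\log n)$ rounds w.h.p., and then part~(ii) supplies a cyclically ordered family of $\Theta(\log n)$-long ``dominance windows'' $W_1^{1},W_2^{1},W_3^{1},W_1^{2},\ldots$, separated by transitions of $O(\log n)$ rounds, such that during $W_i^{j}$ at least $n-O(n^{1-3\eps})$ agents hold species $A_i$ (so the other two species together hold $O(n^{1-3\eps})$ agents), with the dominant species cycling $A_1\to A_2\to A_3\to A_1$. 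Everything below is conditioned on this event together with a union bound over the $O(\poly n)$ rounds of $[t_0,t_1]$.

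\emph{Stage 2: synchronization of the counter.} Writing block $b=\{kb,\ldots,kb+k-1\}$ with controlling species $A_{b+1}$, the two facts to establish are: (a) during a window $W_{b+1}^{j}$ every agent whose index lies in block $b$ reaches $C'_{k(b+1)\bmod 3k}$ within $O(\log n)$ rounds w.h.p., needing only $\le k$ successive interactions with $A_{b+1}$-agents, each succeeding with probability $1-O(n^{-3\eps})$, while every agent in a different block is reset by Rule~2 to the start of that block because its controlling species is then non-dominant; and (b) no agent advances through two full blocks during a single transition, because traversing a block requires $k$ successive interactions with a species of size $O(n^{1-\eps/3})$, an event of probability $n^{-\Omega(1)}$ per attempt for the chosen (large) constant $k$, negligible after the union bound --- this is exactly why the protocol carries $3k$ rather than $3$ counter states. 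Combining (a) and (b) across two consecutive windows shows that from round $t_0+O(\log n)$ on, at the close of each window all agents occupy the same state $C'_{km}$, Rule~2 keeps them parked there until the next window opens (species $A_{m+1}$ being non-dominant meanwhile), and then they traverse block $m$ together; the parking states run $\ldots\to C'_{k}\to C'_{2k}\to C'_{0}\to\ldots$, so the macro-states $C_1,C_2,C_3$ advance cyclically, one step per window.

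\emph{Stage 3: reading off the ticks.} Let $I_i^{j}$ be the portion of the $j$-th parking period during which all agents are in macro-state $C_i$: by Stage~2 it has length $\Theta(\log n)$, while the complementary pieces --- the $O(\log n)$-round block traversals and the $O(\log n)$-round inter-window transitions --- bound the gaps between ticks. With $r=\Theta(\log n)$, the ordering and bounded-gap bullets are immediate from the cyclic window structure, the ``all agents in state $C_i$'' bullet is the synchronization of Stage~2, and the length bullet $|I_i^{j}|\in[ar,br]$ holds for suitable constants, with $a$ made as large as desired by enlarging $k$ (forcing traversals to begin only when the controlling species is very dense, hence late in its window, which lengthens the parking periods relative to $r$) and $b$ depending on $a$ and $k$.

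The main obstacle is Stage~2: matching the ``dominant / non-dominant'' bookkeeping against what Theorem~\ref{th:variantofclock}(ii) actually provides --- that the controlling species is dense enough, for long enough, that the $\le k$-step traversal of the correct block succeeds w.h.p.\ for \emph{every} agent while random-walk excursions into a wrong block snap back before accumulating --- and calibrating $k$ against $\eps$ so that the double-advance event has probability $n^{-\Omega(1)}$ and survives a union bound over all agents and all $O(\poly n)$ rounds. Granting the oscillatory behaviour of Theorem~\ref{th:variantofclock}, the rest is routine.
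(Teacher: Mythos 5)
Your proposal follows essentially the same route as the paper's own (much briefer) argument: invoke Theorem~\ref{th:variantofclock} for the oscillatory dominance windows, observe that an agent traverses block $i$ only after $k$ consecutive meetings with $A_{i+1}$ (which w.h.p.\ certifies that $A_{i+1}$ is large), so the counter advances one block per window and all agents synchronize within one oscillator cycle, yielding the ticks. Your write-up is in fact more detailed than the paper's proof --- in particular the explicit no-double-advance argument, the two-window synchronization count, and the calibration of $k$ against $\eps$ are points the paper only gestures at --- and the obstacle you flag (controlling the non-dominant species during the transitions between windows) is one the paper's proof also leaves implicit.
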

\begin{proof}
The above rules move the state of each agent around a cycle of length $3k$. An agent in state $C'_{(ki) \bmod 3k}$ (which may be interpreted as ``believing'' $A_{i+1}$ to be a minority species at the current time) waits in states of the form $C'_{(ki+r) \bmod 3k}$, for $r < k$, until in $k$ consecutive meetings it has met agents from species $A_{i+1}$ only. This implies, w.h.p., that species $A_{i+1}$ is represented by a sufficiently large fraction of the population $\omega(n^{c'})$, and the state moves to $C'_{(k(i+1)) \bmod 3k}$, where the appearance of $k$ consecutive meetings with agents from species $A_{i+2}$ is awaited, etc. The rate of traversal of this cycle corresponds to the rate of oscillatory behavior of protocol $P_o$, moreover, after one cycle of the oscillator $P_o$, all agents in the population will become synchronized (up to a possible shift of $\eps \ln n$ rounds, w.h.p., where $\eps$ can be made arbitrarily small) as to the values.
\end{proof}

It remains to discuss how to ensure that protocol $P_o$ displays its oscillatory behavior through appropriate initialization.

\paragraph{Controlling $|\mathcal X|$ for always-correct protocols.}

We ensure that the number of agents with set control state $X$ satisfies the required condition $0 < |\mathcal{X}| < n^{1-\eps}$, for any $\eps > 0$, quickly and perpetually after the initialization of the protocol.
\begin{proposition}
\label{simpleelimination}
There is a protocol using $O(1)$ states with one marked state $X$ such that $|\mathcal{X}|>0$ is guaranteed and in $O(n^{1-\varepsilon})$ time $|\mathcal{X}| < n^\eps$, for any $\eps > 0$, w.h.p.
\end{proposition}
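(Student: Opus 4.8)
Here is the approach I would take.

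\medskip

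\noindent The plan is to use the single-variable protocol with state variable $X$, in which every agent is initialized with $X$ set and the only rule is \RULE{X}{X}{X}{\neg X}. Since this is the only rule, every interaction either leaves the configuration unchanged or deletes exactly one agent from $\mathcal{X}$; hence $|\mathcal{X}|$ is non-increasing, and the rule cannot be triggered while $|\mathcal{X}|\le 1$, so $|\mathcal{X}|\ge 1$ holds at every time step. It therefore remains only to show that $|\mathcal{X}|<n^{\eps}$ is reached within $O(n^{1-\eps})$ parallel rounds w.h.p.\ for any fixed $\eps>0$; once that happens, monotonicity keeps $|\mathcal{X}|<n^{\eps}$ forever, and since $|\mathcal{X}|\le n$ always it suffices to handle the worst case $|\mathcal{X}|=n$ at startup.

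\medskip

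\noindent The governing dynamics is the classical ``annihilation'' process: at a step with $|\mathcal{X}|=x$ the interacting pair consists of two $X$-agents -- and thus $x$ drops by one -- with probability $p_x=\binom{x}{2}/\binom{n}{2}=\frac{x(x-1)}{n(n-1)}$, independently of the past. First I would record the expectation bound: the expected number of steps for $|\mathcal{X}|$ to fall from $n$ to $m=\lceil n^{\eps}\rceil$ is $\sum_{x=m+1}^{n}p_x^{-1}=n(n-1)\sum_{x=m+1}^{n}\bigl(\frac{1}{x-1}-\frac{1}{x}\bigr)=n(n-1)\bigl(\frac{1}{m}-\frac{1}{n}\bigr)\le n^{2-\eps}$, i.e.\ $O(n^{1-\eps})$ parallel rounds in expectation. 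To upgrade this to a concentration statement I would use a dyadic decomposition of the run: pick $j^{\star}=\Theta(\log n)$ with $n/2^{j^{\star}}<n^{\eps}$, and for $j=1,\dots,j^{\star}$ call ``phase $j$'' the time interval during which $|\mathcal{X}|$ lies in $(\lceil n/2^{j}\rceil,\ \lceil n/2^{j-1}\rceil]$; since $|\mathcal{X}|$ descends by exactly one per collision, the run genuinely passes through all these phases, and phase $j$ ends after $k_j=\Theta(n/2^{j})$ collisions. Throughout phase $j$ (for $j\le j^{\star}$ and $n$ large, using $\eps>0$) one has $|\mathcal{X}|\ge\lceil n/2^{j}\rceil\ge 2$, so each step is a collision with conditional probability at least $q_j:=\frac{1}{2}\,4^{-j}$. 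Running for $L_j:=\lceil 8k_j/q_j\rceil=\Theta(n2^{j})$ steps, the number of collisions among them stochastically dominates a $\mathrm{Bin}(L_j,q_j)$ variable of mean $\ge 8k_j$ (the varying-but-predictable per-step probabilities being handled by the standard coupling with an i.i.d.\ Bernoulli sequence), so a multiplicative Chernoff bound gives $\Pr[\text{phase }j\text{ exceeds }L_j\text{ steps}]\le\Pr[\mathrm{Bin}(L_j,q_j)<k_j]\le e^{-\Omega(k_j)}\le e^{-\Omega(n^{\eps})}$, using $k_j\ge k_{j^{\star}}=\Theta(n/2^{j^{\star}})=\Omega(n^{\eps})$. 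A union bound over the $j^{\star}=\Theta(\log n)$ phases keeps the failure probability at $e^{-\Omega(n^{\eps})}$, and on the complementary event $|\mathcal{X}|$ has dropped below $n^{\eps}$ after at most $\sum_{j\le j^{\star}}L_j=\Theta(n\cdot 2^{j^{\star}})=\Theta(n^{2-\eps})$ steps, i.e.\ $O(n^{1-\eps})$ parallel rounds; monotonicity does the rest.

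\medskip

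\noindent I expect the only genuinely technical point to be this last concentration step, and it is mild: the $\Theta(\log n)$-fold union bound is negligible precisely because each phase fails with probability exponentially small in $n^{\eps}$ rather than merely polynomially small in $n$, and the stochastic-domination argument for the dependent per-step collision indicators is routine. I would finally remark that the resulting bound is deliberately loose -- $O(n^{1-\eps})$ parallel time is polynomial -- since the faster, $O(\poly\log n)$-time elimination of $X$ needed for the w.h.p.\ framework is obtained by the separate, more elaborate $k$-level construction of Theorem~\ref{k-stanowa-dynamika}; and that, pleasingly, the same two-state protocol satisfies the claimed bound simultaneously for every $\eps>0$, which is why $\eps$ plays no role in its (constant) state count.
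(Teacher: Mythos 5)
Your proposal is correct and follows essentially the same route as the paper's proof: the same single annihilation rule on $X$, monotonicity plus the impossibility of firing at $|\mathcal{X}|=1$ for the lower bound, and a dyadic phase decomposition in which phase $j$ takes $\Theta(n\cdot 2^j)$ interactions w.v.h.p., summing to a geometric series dominated by the last term. You merely spell out the Chernoff/stochastic-domination step in more detail and work directly in the statement's parametrization (target $n^{\eps}$ in $O(n^{1-\eps})$ rounds), whereas the paper's proof text uses the equivalent substitution $\eps\mapsto 1-\eps$.
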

\begin{proof}
This can be achieved within $O(n^\eps)$ rounds, w.h.p., by initializing $X \gets \on$ for all agents, and applying the following rule (in composition with all other rules) \RULE{X}{X}{\neg X}{X}
which eventually unsets $X$ for all but one agents. A folklore computation following from a concentration analysis around the governing equation $\mathbb{E}\frac{d |\mathcal{X}|}{dt} = - (|\mathcal{X}|/n)^2$, gives the required bound on the number of rounds after which the bound  $|\mathcal{X}| < n^{1-\eps}$ is satisfied. Indeed, consider $T_j$, the number of interactions it takes for $|\mathcal{X}|$ to drop from $n/2^{j}$ to $n/2^{j+1}$, call it phase $j$. Since each interaction decreases $|\mathcal{X}|$ with probability between $(1/2^{j})^2$ and $(1/2^{j+1})^2$, we have that with very high probability $T_j = \Theta(n \cdot 2^j)$ for $j$ such that $n/2^j \ge n^{1-\eps}$. By observing that $\sum_{j=0}^{\eps\log_2 n} T_j = \Theta(T_{\eps\log_2 n}) = \Theta(n \cdot n^{\eps})$, we have that the desired behavior is achieved after $O(n^{\eps})$ rounds.
\end{proof}

We remark that achieving the bound $|\mathcal{X}| < n^{1-\eps}$ is possible more quickly by using a super-constant number of auxiliary states. This task is known as \emph{junta-election}:
\begin{proposition}[c.f. \cite{DBLP:conf/soda/GasieniecS18} Lem.\ 4.2,\ 4.5, Thm.\ 4.1]
There is a protocol using $O(\log \log n)$ states with one marked state $X$ such that $|\mathcal{X}| > 0$ is always guaranteed, and in $O(\log n)$ time, $|\mathcal{X}| < n^{1-\eps}$, for any $\eps > 0$, w.h.p.
\end{proposition}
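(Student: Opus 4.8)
The plan is to instantiate the junta-election procedure of G\k{a}sieniec and Stachowiak~\cite{DBLP:conf/soda/GasieniecS18}, taking the marked state $X$ to encode membership in the elected junta, and to import the liveness and size guarantees established there. Recall the mechanism: besides the flag $X$ (set for every agent at start-up), each agent stores a \emph{level} $\ell$, a counter with values in $\{0,1,\ldots,L\}$ for $L=\Theta(\log\log n)$, so that the state $(X,\ell)$ ranges over $O(\log\log n)$ values. All agents begin at level $0$, and the transition rules run a decentralized competition in which an agent advances from level $\ell$ to $\ell+1$ only after a run of favourable coin tosses, the rules being arranged so that, writing $m_\ell$ for the number of agents surviving at level $\ell$, one has $m_{\ell+1}$ of order $m_\ell^2/n$; this doubly-exponential thinning is exactly what makes $\Theta(\log\log n)$ levels enough to bring the surviving set below $n^{1-\eps}$ (and, in the full protocol, all the way down to polylogarithmic size). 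The key structural point for us is that $X$ is reset for an agent \emph{only} when it meets an agent of strictly higher level; hence $X$ is monotonically consumed, $\mathcal X$ always contains the nonempty set of agents currently at the maximum level, and therefore $|\mathcal X|>0$ at every step, unconditionally.

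Next I would record the two quantitative facts from~\cite{DBLP:conf/soda/GasieniecS18} that carry the argument. First, from the analysis of how fast levels advance (their Lemma~4.2 and Theorem~4.1): within $O(\log n)$ parallel time the competition has progressed through enough levels that the set of agents still carrying $X$ has size below $n^{1-\eps}$, w.h.p. Second, from the analysis of the thinning between levels (their Lemma~4.5): the number of agents that ever reach a given level is at most $n^{1-\eps}$ for the chosen $\eps$, w.h.p., so the bound, once attained, persists. Combining these with the unconditional invariant $|\mathcal X|>0$ from the first paragraph yields the statement, the number of states being $O(L)=O(\log\log n)$.

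The part I expect to need the most care is that the liveness guarantee must be \emph{unconditional}: $|\mathcal X|>0$ has to survive even the low-probability events on which the size bound fails, and also the transient phase before the competition settles. This is precisely why the protocol is phrased so that the only way to lose $X$ is to meet a strictly-higher-level agent --- the agents attaining the global maximum level then form a permanent fixed point of the elimination rule, so $\mathcal X$ is never emptied, under any scheduler. The remaining ingredients, namely the $O(\log n)$ bound on the time for the levels to settle and the $n^{1-\eps}$ bound on the top-level population, are the content of the cited lemmas of~\cite{DBLP:conf/soda/GasieniecS18}, which I would import rather than reprove here.
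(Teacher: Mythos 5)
Your proposal is correct and matches the paper's treatment: the paper states this proposition purely as an import of the junta-election results of G\k{a}sieniec and Stachowiak (Lemmas 4.2, 4.5 and Theorem 4.1 of the cited work) and offers no proof of its own, which is exactly what you do, only with a more explicit sketch of the level-competition mechanism and of why the agents at the global maximum level can never lose the flag $X$. Your added emphasis that the non-emptiness of $\mathcal{X}$ is deterministic (holding under any scheduler, independently of the w.h.p.\ size bound) is consistent with, and a useful elaboration of, the guarantee the paper relies on.
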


\paragraph{Controlling $|\mathcal X|$ for w.h.p. protocols.} When w.h.p.\ results are required, it is possible to adapt protocol $P_o$ to perform $\Theta(\poly \log n)$ good oscillations before stopping, w.h.p., while using only $O(1)$ states.

In fact, to achieve the correct number of oscillations, it suffices to construct a ``signal'' following the asymptotic relation $|\mathcal{X}| \sim \exp(-t^{1/k})$ over time $t$, for some constant $k$.

\begin{proposition}
\label{k-stanowa-dynamika}
There is a protocol using $O(1)$ states with one marked state $X$ such that  in $O(\poly \log n)$ time $|\mathcal{X}| < n^{1-\eps}$, for any choice of $\eps>0$, w.h.p.
\end{proposition}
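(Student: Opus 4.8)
The plan is to follow the hint in the preceding paragraph: exhibit an explicit $O(1)$-state protocol whose control-state population obeys, to leading order, $|\mathcal{X}|(t) = n\exp(-\Theta(t^{1/k}))$ in parallel time $t$, for a constant $k$ fixed at design time, and then read off the statement. Such a profile drops below $n^{1-\eps}$ as soon as $\Theta(t^{1/k}) > \eps\ln n$, i.e.\ after $O((\eps\ln n)^k) = O(\poly\log n)$ rounds, for every $\eps>0$; moreover, since it reaches $|\mathcal{X}|=1$ only at $t=\Theta(\ln^k n)$, it keeps $|\mathcal{X}|$ strictly positive for $\Omega(\log^k n)$ additional rounds, which is what downstream uses of the proposition require. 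So the entire task reduces to building a signal with this time profile from finitely many states.

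The construction cascades three copies of the quadratic-elimination rule of Proposition~\ref{simpleelimination}. We introduce boolean state variables $X_1,X_2$ alongside the control flag $X$, all initialized to $\on$, and compose the rules \RULE{X_1}{X_1}{\neg X_1}{X_1},\quad \RULE{X_2}{X_1}{\neg X_2}{X_1},\quad \RULE{X}{X_2}{\neg X}{X_2}, where each rule is executed with a constant activation probability set by a constant number of fair coin tosses; since the $i$-th rule never modifies the variables of the lower-indexed levels, the three levels are composed one atop the next in the sense of the preliminaries. Writing $\phi_1,\phi_2,\phi_X$ for the three population fractions, the mean-field ($n\to\infty$) approximation of the composed dynamics is the one-directionally coupled system $\phi_1' = -\alpha_1\phi_1^2$, $\phi_2' = -\alpha_2\phi_1\phi_2$, $\phi_X' = -\alpha_3\phi_2\phi_X$, where each $\alpha_i>0$ is a fixed combinatorial rate constant times the chosen coin probability. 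From the initial condition $\phi_1(0)=\phi_2(0)=\phi_X(0)=1$ this integrates in closed form: $\phi_1(t)=(1+\alpha_1 t)^{-1}$, hence $\phi_2(t)=(1+\alpha_1 t)^{-\beta}$ with $\beta := \alpha_2/\alpha_1$, hence, provided $0<\beta<1$, $\phi_X(t)=\exp(-\tfrac{\alpha_3}{\alpha_1(1-\beta)}((1+\alpha_1 t)^{1-\beta}-1))=\exp(-\Theta(t^{1-\beta}))$. Any fixed dyadic $\beta\in(0,1)$ — e.g.\ $\beta=1/2$, requiring only $\alpha_2/\alpha_1=1/2$ — already gives $\phi_X(t)=\exp(-\Theta(\sqrt t))$ and hence $|\mathcal{X}|<n^{1-\eps}$ after $O(\log^2 n)$ rounds, which proves the proposition; to additionally make the positive-but-small window last $\Omega(\log^k n)$ for a prescribed integer $k$ (as needed by the surrounding construction) one instead takes $\beta = 1-2^{-m}$ with $2^m\ge k$, which is dyadic hence realizable, yielding $\phi_X(t)=\exp(-\Theta(t^{2^{-m}}))$.

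The step I expect to be the main obstacle is the passage from this continuous picture to the actual randomized discrete process over a polylogarithmic horizon. The key observation is that all three populations stay well clear of the singular set of the ODE system: throughout $t=O(\poly\log n)$ one has $|\mathcal{X}_1| = \Theta(n/t)$ and $|\mathcal{X}_2| = \Theta(n/t^{\beta})$, both $n^{1-o(1)}$, while $|\mathcal{X}|$ is only ever tracked down to $\approx n^{1-\eps}$, also $n^{1-o(1)}$ in the regime $\eps<1/2$ demanded by the clock of Theorem~\ref{th:variantofclock}. Since every coordinate therefore remains at distance $\gg n^{-1/2}$ from $0$ and from the coordinate hyperplanes, the standard concentration analysis for finite-state population protocols applies uniformly over the window: one splits time into blocks, applies the martingale/Azuma argument of \cite{DBLP:journals/corr/DudekK17}[Lemma~1 and its subsequent applications], and notes that this argument is insensitive to whether the scheduler is asynchronous or the random-matching synchronous one. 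This yields that, w.h.p., the discrete trajectories of $|\mathcal{X}_1|,|\mathcal{X}_2|,|\mathcal{X}|$ stay within a $(1\pm o(1))$ factor of their ODE values at every round of the window; in particular $|\mathcal{X}|<n^{1-\eps}$ by round $O(\log^k n)$ w.h.p., and, running the same estimate a little further over the interval on which $\phi_X$ only decays from $n^{-\eps}$ down to $n^{-\Omega(1)}$ — a range where $|\mathcal{X}| = n^{\Omega(1)}$ dominates the $O(\sqrt{\poly\log n})$-scale fluctuations — $|\mathcal{X}|$ remains positive for $\Omega(\log^k n)$ further rounds. The remaining care points are routine: checking that the one-directional composition leaves the leading-order rates of the lower levels unchanged (immediate, since those levels evolve autonomously from the ones above), and absorbing the initial $O(1)$ rounds, during which $\phi_1\sim 1/t$ has not yet kicked in, into the final $O(\poly\log n)$ bound.
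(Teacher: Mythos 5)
Your proposal is correct and proves the proposition, but by a genuinely different construction from the paper's. The paper likewise manufactures a polynomially decaying intermediate signal and uses it to drive an $\exp(-\Theta(t^{c}))$ decay of $|\mathcal{X}|$, but it obtains the fractional exponent from the \emph{order} of the reaction rather than from a tuned rate ratio: an agent loses its flag $Z$ only after $k+1$ consecutive meetings with $Z$-agents, counted by per-agent flags $Z_1,\ldots,Z_k$ that reset upon meeting a $\neg Z$ agent, which yields the mean-field law $\phi_Z'=-\phi_Z^{k+2}$ and hence $\phi_Z=\Theta(t^{-1/(k+1)})$; then $X$ is eliminated after $k$ consecutive meetings with $Z$-agents, giving $\phi_X'=-\phi_X\phi_Z^{k}$ and $\phi_X=\exp(-\Theta(t^{1/(k+1)}))$. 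Your cascade of three bilinear rules with rate ratio $\beta=\alpha_2/\alpha_1\in(0,1)$ reaches the same functional form $\exp(-\Theta(t^{1-\beta}))$ with fewer auxiliary flags and only pairwise pattern matching (no consecutive-meeting counters), at the cost of realizing the rate ratio by rule duplication or coin tosses --- which is legitimate both under the paper's uniform rule-selection convention and in its randomized model. Both arguments bottom out in the same mean-field-plus-concentration step, which the paper itself leaves at the level of invoking Lemma~1 of the oscillator paper and Azuma's inequality; your discussion of why the trajectory stays at distance $n^{1/2+\Omega(1)}$ from the coordinate hyperplanes throughout the polylogarithmic window is, if anything, more explicit than the paper's. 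Both constructions also share the caveat, acknowledged in the introduction, that $\mathcal{X}$ eventually becomes empty; your choice of $\beta=1-2^{-m}$ to stretch the window $0<|\mathcal{X}|<n^{1-\eps}$ to $\Omega(\log^{k}n)$ rounds plays exactly the role of the paper's choice of a sufficiently large constant $k$.
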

\begin{proof}





We observe that for $k>0$, the same can be achieved by using a higher-order variant of the previously analyzed process, where for some constant $k\ge 1$, we initialize $Z \gets \on$ and additional state flags $Z_1,\ldots,Z_{k} \gets \off$.


\RULE{.}{\neg Z}{\neg Z_1 \wedge \ldots \wedge \neg Z_k}{.}

\RULE{Z \wedge \neg Z_1 \ldots \wedge \neg Z_k}{Z}{Z_1}{.}

\RULE{Z_i}{Z}{\neg Z_i \wedge Z_{i+1}}{.} for any $i<k$

\RULE{Z_k}{Z}{\neg Z \wedge \neg Z_k}{.}

This protocol implements dynamics of the form $\mathbb{E} \frac{d |\mathcal{Z}|}{dt} = - |\mathcal{Z}| \cdot (|\mathcal{Z}|/n)^{k+1}$, which solves to $|\mathcal{Z}| = \Theta( n \cdot t^{-1/(k+1)})$.

We now use $Z$ to create a signal $X$, initially $X \gets \on$, using additional states $X_1, \ldots, X_{k-1} \gets \off$.

\RULE{.}{\neg Z}{\neg X_1 \wedge \ldots \wedge \neg X_k}{.}

\RULE{X \wedge \neg X_1 \ldots \wedge \neg X_g}{Z}{X_1}{.}

\RULE{X_i}{Z}{\neg X_i \wedge X_{i+1}}{.} for any $i<k-1$

\RULE{X_{k-1}}{Z}{\neg X \wedge \neg X_{k-1}}{.}

This implements a dynamics of the form $\mathbb{E} \frac{d |\mathcal{X}|}{dt} = - |\mathcal{X}| \cdot (\mathcal{Z}/n)^{k} = - |\mathcal{X}| \cdot t^{-k/(k+1)} $, which solves to $|\mathcal{X}| = n \cdot e^{-t^{1/k}}$.

The fact that the required dynamics is in fact realized with high probability by the given protocol follows from a mean-field analysis on the continuous process and standard concentration bounds (notably, taking into account~\cite{DBLP:journals/corr/DudekK17}[Lemma~1] and using Azuma's inequality).
\end{proof}

\subsection{Hierarchy of Clocks with Logarithmically Slowed Rate}
\label{hierarchia}

In what follows, we will be using a base clock $C^{(j)}$, where $j \geq 1$, with rate $r^{(j)}$ and sufficiently long module $m$, where we assume $4|m$ for technical reasons to simulate a slowed down version of other protocols. (For clock $C^{(1)}$, we put $r^{(1)} = \alpha \ln n$ for some large constant $\alpha$ depending on the sequential code being executed.) The simulation of a protocol $P$ proceeds as follows. All agents have state variables of clock $C^{(j)}$ and two copies of state variables of protocol $P$, which we call the \emph{current} copy and a \emph{new} copy. Then, in composition with the rules of clock $C^{(j)}$, the following rules are executed for all $i \in \{ 0, \ldots, m/4-1\}$:
\begin{enumerate}
\item When two agents meet in state $C^{(j)}_{4i}$, both having a set trigger $S$, then they simulate the interaction of protocol $P$ on the current copy of the state variables of protocol $P$, storing the new values of their state variables to the respective new copies, and unsetting trigger $S$ in the same interaction.
\item When two agents meet in state $C^{(j)}_{4i+2}$, they assign the new copy of their state variables for protocol $P$ to the current copy, and set trigger $S$ in the same interaction.
\end{enumerate}
We first observe that when clock $C^{(j)}$ is operating correctly, rules of the form 1 and 2 above are separated in time by the odd clock ticks of clock $C^{(j)}$. Thus, w.h.p. the operation of the simulation of protocol $P$ can be more abstractly viewed as a computation of a random linear-size matching $M$\footnote{Matching $M$ is created incrementally, with the first agents to interact after entering state $C^{(j)}_{4i}$ being more likely to join it. However, for the applied clocks which are based on missing species detection, the probability that an agent enters state $C^{(j)}_{4i}$ at a given time is almost the same for all agents, differing for any two agents by at most $O(n^{-k})$, where constant $k$ can be made arbitrarily large by a choice of constant $a$ of the clock. It follows by an elementary analysis that the probability that $M$ is any fixed matching of given size is within a factor of $(1\pm O(n^{-k+2}))$ from that in the uniform distribution on the set of all matchings of the same size. This polynomially small non-uniformity becomes negligible in protocol analysis.} on the population in between the $(4i-1)$-st and $(4i+1)$-st clock ticks of the current clock cycle, and then an execution of the corresponding rules of protocol $P$ along the chosen matching $M$, \emph{deferred} to some time between the $(4i+1)$-st and $(4i+3)$-rd clock ticks of the current clock cycle. We have thus effectively (i.e., conditioned on the non-occurrence of events occurring with probability $O(n^{-c})$ where $c$ can be made an arbitrarily large constant) implemented a fair execution of protocol $P$ under a random-matching scheduler, but slowed by a factor of $\Theta(r^{(j)})$ with respect to the natural execution rate of such a scheduler.

Whereas we are unaware of any generic method of showing the equivalence between a random-sequential and random-matching scheduler (and consider designing such a method to be a question of significant interest), the analysis of the vast majority of protocols from the literature carries over between the two scheduler models without significant changes. This is the case, in particular, for clock protocols, such as our base oscillator protocol from~\cite{DBLP:journals/corr/DudekK17}, and all the asymptotic results on the performance of such clock protocols remain intact under the random-matching scheduler model.

By considering protocol $P$ to be a copy of clock $C^{(1)}$, we obtain a new clock $C^{(j+1)}$ whose rate is $r^{(j+1)} = \Theta(r^{(1)} r^{(j)})$, with $r^{(j+1)} \geq r^{(1)} r^{(j)}$. Recalling that $r^{(1)} = \alpha \ln n$, we obtain in particular for all $j$ constant: $r^{(j+1)} \geq \alpha \ln n r^{(j)}$ and $r^{(j)} = \Theta((\alpha \ln n)^j)$. This provides us with a hierarchy of clocks, each ticking at polylogarithmic rate, with clock $C^{(j)}$ performing at least $\alpha \ln n - O(1)$ cycles during one cycle of clock $C^{(j+1)}$. We choose the hierarchy of clocks to have constant height, more precisely, $j \leq l_{\max}$, where we recall that $l_{\max}$ is the maximum loop depth of the executed sequential code. By applying union bounds, we can thus also ensure that all clocks in the hierarchy are operating correctly, w.h.p.

The details of the construction provide one more property of the clock hierarchy which simplifies subsequent analysis: while all clocks are operating correctly and at least one agent in the population is in state $C^{(j)}_0$ (or, more generally, in a state divisible by 4 for clock $C^{(j)}$), no update to the current copy of state of any of the clocks $C^{(j')}$, for $j' > j$, is being performed by any agent. We can thus compose the executed protocol with an auxiliary protocol, which at the beginning of a clock cycle for clock $C^{(j)}$, stores for each agent in variable $C^{*(j+1)}_i$ a local copy of the state variable $C^{*(j+1)}_i$:

\RULE{C^{(j)}_0 \wedge C^{(j+1)}_i}{C^{(j)}_0}{C^{(j)}_0 \wedge C^{*(j+1)}_i}{C^{(j)}_0}

\RULE{C^{(j)}_0 \wedge \neg C^{(j+1)}_i}{C^{(j)}_0}{C^{(j)}_0 \wedge \neg C^{*(j+1)}_i}{C^{(j)}_0}

\noindent for all $j>0$. Note that not all agents will have the same values of $C^{*(j+1)}_i$, since the simulated clock $C^{(j+1)}$ might have been ``frozen'' in the process of updating its state while it was being copied. In this case, at most two states $C^{*(j+1)}_i$, $C^{*(j+1)}_{(i+1) \bmod m}$ may be set for different agents of the population. We fix this using a standard consensus rule applied strictly later (i.e., after a full tick of clock $C_j$), defaulting to the larger of the value:

\RULE{C^{(j)}_2 \wedge C^{*(j+1)}_i}{C^{(j)}_2 \wedge C^{*(j+1)}_k}{C^{(j)}_2 \wedge C^{*(j+1)}_i}{C^{(j)}_2 \wedge C^{*(j+1)}_i \wedge \neg C^{*(j+1)}_k}, for $i > k$.

Thus, from state $C^{(j)}_4$ until the end of the current cycle for clock $C^{(j)}$, each agent now holds stored (and does not update) the same value $C^{*(j+1)}_i$. We remark that when $C^{*(j+1)}_i$ is set, then always either $C^{(j+1)}_i$ is set or $C^{(j+1)}_{(i+1) \bmod m}$ is set for each agent (i.e., the locally stored state of the higher-level clock is at most one state behind the real state of that clock, for any agent), since the cycle length of clock $C^{(j)}$ is (much) more than $2$ times shorter than that of clock $C^{(j+1)}$.

As a direct corollary, we make the following crucial observation.
\begin{proposition}
Let $\sigma = (\sigma_{l_{\max}}, \ldots, \sigma_1)$ be an arbitrary vector with $\sigma_j \in \{4,\ldots, m-2\}$, for all $j \in \{1, \ldots, l_{\max}\}$. Suppose for some agent $C^{(1)}_{\sigma_1} \wedge \bigwedge_{j>1} C^{*(j)}_{\sigma_j}$ is set. Then, $\bigwedge_{j>1} C^{*(j)}_{\sigma_j}$ is also set for all agents in the population.
\qed
\end{proposition}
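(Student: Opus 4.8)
The plan is to prove the statement by induction on the clock level $j$, showing for each $j=1,2,\ldots,l_{\max}-1$ that $C^{*(j+1)}_{\sigma_{j+1}}$ is set for \emph{every} agent; the conjunction of these conclusions over all such $j$ is precisely the asserted $\bigwedge_{j>1} C^{*(j)}_{\sigma_j}$. The whole argument is carried out in the regime in which the entire (constant-height) clock hierarchy operates correctly, which holds w.h.p.\ by a union bound over the $l_{\max}$ levels and is the standing assumption here. Two facts extracted from the construction above will do all the work: (F1) the stored copy $C^{*(j+1)}$ is touched only while an agent's clock $C^{(j)}$ is in state $C^{(j)}_0$ (the copy rule) or $C^{(j)}_2$ (the consensus rule), and the consensus converges within the tick of $C^{(j)}$ at state $2$ (w.h.p., for a sufficiently large clock constant), so that once every agent's $C^{(j)}$ has advanced past state $2$ within one common cycle and before that cycle wraps back to state $0$, all agents carry one and the same value of $C^{*(j+1)}$; and (F2) the stored copy trails the true clock by at most one step, i.e.\ if $C^{*(j)}_{i}$ is set for an agent then that same agent has $C^{(j)}_{i}$ or $C^{(j)}_{(i+1)\bmod m}$ set.

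For the base case I would argue as follows. The distinguished agent has $C^{(1)}_{\sigma_1}$ set with $\sigma_1\in\{4,\ldots,m-2\}$; since $C^{(1)}$ operates correctly, at every moment all agents occupy $C^{(1)}$-states within one step of each other, so each agent is in some state of $\{\sigma_1-1,\sigma_1,\sigma_1+1\}\subseteq\{3,\ldots,m-1\}$ — past the state-$2$ consensus of the current cycle and not yet wrapped. By (F1) this pins down a single common value of $C^{*(2)}$ across the population, which must equal $\sigma_2$ since the distinguished agent reports $C^{*(2)}_{\sigma_2}$. For the inductive step, assume $C^{*(j)}_{\sigma_j}$ is set for all agents with $\sigma_j\in\{4,\ldots,m-2\}$; by (F2) every agent's clock $C^{(j)}$ is then in state $\sigma_j$ or $\sigma_j+1$, and since $4\le\sigma_j$ and $\sigma_j+1\le m-1<m$ this state lies in $\{3,\ldots,m-1\}$, so (F1) again forces a common value of $C^{*(j+1)}$, which is $\sigma_{j+1}$ because the distinguished agent reports it. Iterating up to $j=l_{\max}-1$ closes the induction and yields the claim.

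I do not expect a genuine obstacle here — this is a bookkeeping corollary of the clock machinery — but the point requiring the most care is the boundary in (F1): one must verify that an agent (or, via (F2), its at-most-one-step-stale copy $C^{*(j)}$) reporting a state $\ge 4$ truly forces the \emph{entire} population past the consensus tick of $C^{(j)}$ at state $2$. This is exactly why the admissible range is $\{4,\ldots,m-2\}$ and not all of $\{0,\ldots,m-1\}$: the lower buffer (states $0,1,2,3$) absorbs the copy and consensus phases together with the one-step skew between agents of a correctly operating clock, while the upper bound $\sigma_j\le m-2$ guards against the wrap-around that (F2) could introduce by bumping a reported state up by one. A secondary point is that every assertion is implicitly conditioned on simultaneous correct operation of all $l_{\max}$ clocks, which is where the high-probability qualifier enters.
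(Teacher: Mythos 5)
Your argument is correct and is essentially the one the paper intends: the proposition is stated there as a direct corollary of exactly your (F1) and (F2) — the consensus on $C^{*(j+1)}$ completed during tick $C^{(j)}_2$ and held from state $C^{(j)}_4$ to the end of the cycle, plus the at-most-one-state lag of the stored copy — and your level-by-level induction is just the natural formalization of that. The care you take with the boundary (why the range is $\{4,\ldots,m-2\}$, absorbing the copy/consensus ticks below and the one-step skew above) matches the paper's construction.
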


We now relate the period $m$ of clock $C_1$ to the compiled code, by setting $m := 4 w_{\max} + 2$. In relation to the above Proposition, we define the \emph{time path} $\tau = (\tau_{l_{\max}}, \ldots, \tau_1)$ of an agent as the unique vector such that $\tau_j \in \{1, \ldots, w_{\max}\}$ for all $j$, and  $C^{(1)}_{4 \tau_1} \wedge \bigwedge_{j>1} C^{*(j)}_{4\tau_j} =: \Pi_{\tau}$ is set for this agent; if no such vector exists, we put $\tau = \perp$.

The sequence of time paths observed in the program will completely determine its execution. We first formalize the recursively defined properties of defined clock hierarchy into iterative form, expressed by the following claim (which follows from a standard de-recursivization of the analysis).

\begin{proposition}\label{pro:exec}
The time paths of all agents satisfy the following properties w.h.p. during any time interval $I$ of at most polynomial length during which all clocks are operating correctly:
\begin{itemize}
\item At any moment of time, all time paths of agents which are not equal to $\perp$ are identical and equal to the same value $\tau$.
\item We call a time step $t$ in which all agents hold the same time path $\tau_t \neq \perp$ (i.e., no agent holds a time path of $\perp$ during step $t$) a \emph{good step}, and denote by $I_G \subseteq I$ the subset of good steps in interval $I$. Then, the sequence of time paths observed over good steps, $(\tau_t)_{t \in I_G}$ is
    a valid output of \emph{some} execution of the non-deterministic program with the pseudocode given in Fig.~\ref{fig:loopstructure}.
\item In the time interval between two good steps, the time path held by any agent is first the time path from the preceding good step, then $\perp$, and then the time path from the succeeding good step. \qed
\end{itemize}

\end{proposition}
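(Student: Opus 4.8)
The plan is to unroll the recursion built into the clock hierarchy of Sections~\ref{hierarchia} and~\ref{sec:compiledrules} by induction on the clock level, reducing all three items to the previous Proposition (the ``crucial observation'') together with two elementary facts about a correctly operating clock. Throughout I would argue on the event --- which holds w.h.p.\ by a union bound over the $O(1)$ clocks, the $O(1)$ families of stored copies $C^{*(j)}$, and the $\poly(n)$ time steps of $I$ --- that every $C^{(j)}$ operates correctly on $I$, that each copy rule $C^{(j)}_0\wedge C^{(j+1)}_i\to C^{*(j+1)}_i$ and each $C^{*(j+1)}$-consensus rule attains its intended outcome, and that each simulated matching scheduler driving a clock is within a $(1\pm n^{-\Omega(1)})$ factor of uniform. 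On this event I will repeatedly use: (a) at every moment of $I$ each agent is in a single state of each clock, and the states held by different agents for a fixed clock span at most two cyclically consecutive indices (between consecutive ticks an agent advances one step at a time); (b) the previous Proposition, so that whenever some agent has $\Pi_\tau$ set then every agent has $\bigwedge_{j>1}C^{*(j)}_{4\tau_j}$ set, each $C^{*(j)}$ being single-valued per agent; (c) the copy $C^{*(j+1)}$ is written (then reconciled) only while $C^{(j)}$ sits near state $0$, where $C^{(j+1)}$ is frozen, and is held fixed for the remainder of each cycle of $C^{(j)}$.

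\paragraph{First and third items.} These I would dispatch directly. If agents $x,y$ both have non-$\perp$ time paths, applying (b) to $x$ forces $\bigwedge_{j>1}C^{*(j)}_{4\tau^x_j}$ at $y$, hence $\tau^x_j=\tau^y_j$ for $j>1$ by single-valuedness; and for $j=1$ both agents have $C^{(1)}$ at an index $\equiv 0\pmod 4$ while by (a) these indices differ by at most one, so they coincide and $\tau^x=\tau^y$. For the third item I would follow one agent's state of clock $C^{(1)}$: its time path is non-$\perp$ exactly while $C^{(1)}$ sits at a state $4\tau_1$ with all stored copies at states of the form $4\tau_j$, and is $\perp$ while $C^{(1)}$ traverses the intervening states (including the cycle-boundary window, during which $C^{*(2)}$ and hence every higher copy may advance once, by (c)). A good step is precisely a step inside the intersection of a tick of $C^{(1)}$ at such a state with an interval on which all stored copies are globally consistent; so by (a) no good step lies strictly between two consecutive good-step intervals, and within the width-two transition window each agent moves from the earlier common value $\tau$, through $\perp$, to the later common value $\tau'$, which is exactly the claimed behaviour.

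\paragraph{Second item (the core).} Here I would induct over the clock levels, from the fastest clock $C^{(1)}$ (deepest nesting level of $T$, and leaf rulesets) outward to $C^{(l_{\max})}$, proving the statement $S_j$: on every maximal sub-interval of $I$ on which the slower clocks hold fixed states $C^{(j')}_{4\tau_{j'}}$ for all $j'>j$ (no constraint when $j=l_{\max}$), the sequence of initial segments $(\tau_j,\dots,\tau_1)$ of the time paths read at the good steps of that sub-interval is a valid output of the subprogram of Fig.~\ref{fig:loopstructure} obtained by entering the node of $T$ addressed by $(\tau_{l_{\max}},\dots,\tau_{j+1})$ at an arbitrary point of its loop body and running forward. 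The base $S_0$ is vacuous. For $S_{j+1}$ from $S_j$: on such a sub-interval the clock $C^{(j+1)}$, driven --- after the $\Theta(r^{(j)})$ slow-down --- by the matching scheduler simulated through $C^{(j)}$ (which is why Theorem~\ref{th:variantofclock} is needed also under a random-matching scheduler), operates correctly and hence cycles through its states in order; by (c) the globally visible values $C^{*(j+1)}_{4\tau_{j+1}}$ therefore run through $\tau_{j+1}=1,2,\dots,w_{\max}$ and wrap, this counter incrementing at the cycle boundaries of $C^{(j)}$, and each iteration of the loop at this level spans $\Theta(\alpha\ln n)$ cycles of $C^{(j)}$ --- at least the designed $c\ln n$ rounds once $\alpha$ is chosen large in terms of $c,m,l_{\max}$, and at most a constant multiple of this, which is the source of the nondeterminism in the loop counts. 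While $\tau_{j+1}$ equals a fixed $v$, $S_j$ asserts that the good-step initial segments $(\tau_j,\dots,\tau_1)$ form a valid execution of the $v$-th child of the current node; concatenating these over $v=1,\dots,w_{\max}$ and over the successive wraps of $\tau_{j+1}$, and allowing the first and last (possibly partial) cycles at the endpoints of the sub-interval, is exactly a valid execution of the loop at that node, i.e.\ $S_{j+1}$. At the innermost level a child is a leaf ruleset, run for a single tick of $C^{(1)}$ of length $\ge c\ln n$ rounds, as the leaf's \key{execute} instruction requires. Taking $j=l_{\max}$ gives the item.

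\paragraph{Expected main obstacle.} I expect the delicate point to be the ``carry'' in the inductive step: matching the instant $C^{(j)}$ wraps (so $\tau_j$ resets from $w_{\max}$ to $1$) with the instant the enclosing $C^{(j+1)}$ advances $\tau_{j+1}$, and showing that all agents register this carry identically. Fact (c) together with the previous Proposition is exactly what makes this work --- while $C^{(j)}$ is near state $0$ the clock $C^{(j+1)}$ is frozen, so the snapshot taken into $C^{*(j+1)}$ is well defined and, after the consensus rule, common to all agents; and within a single cycle of $C^{(j)}$ all higher coordinates are constant since the corresponding stored copies are frozen. Beyond this the argument is bookkeeping, together with the routine but essential task of confining every w.h.p.\ failure mode --- a clock leaving correct operation, a copy or consensus rule misfiring, a simulated matching too far from uniform --- into a single union bound over the $\poly(n)$ steps of $I$, which is where the clock-robustness constant $a$ and the loop-length constant $c$ must be taken large enough relative to the constant height $l_{\max}$.
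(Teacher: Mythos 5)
Your proposal is correct and follows essentially the route the paper intends: the paper itself gives no written proof of this Proposition (it is asserted to ``follow from a standard de-recursivization of the analysis'' of the clock hierarchy), and your level-by-level induction using the preceding Proposition, the freezing of $C^{(j+1)}$ near state $C^{(j)}_0$, and clock correctness under the simulated matching scheduler is precisely that de-recursivization, worked out in more detail than the paper provides. Your identification of the carry at cycle boundaries as the delicate point is also consistent with the role the $C^{*(j+1)}$ snapshot-and-consensus rules play in the construction.
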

\begin{figure}
\begin{protocol}
GoodStepCount := 0
\textbf{for} $\tau_{l_{\max}} := 1$ \textbf{to} $+\infty$ \textbf{do}:
    \textbf{for} $\tau_{l_{\max}-1} := 1$ \textbf{to} $w_{\max} \cdot$ RandInt[$\gamma \ln n, \delta \ln n$] \textbf{do}:
        \ldots
            \textbf{for} $\tau_{2} := 1$ \textbf{to} $w_{\max} \cdot$ RandInt[$\gamma \ln n, \delta \ln n$]  \textbf{do}:
                \textbf{for} $\tau_{1} := 1$ \textbf{to} $w_{\max} \cdot$ RandInt[$\gamma \ln n, \delta \ln n$]  \textbf{do}:
                    \textbf{for} RulesetLoop$ := 1$ \textbf{to} RandInt[$\gamma n \ln n, \delta n \ln n$]  \textbf{do}:
                        \textbf{print} $(\tau_{l_{\max}} \bmod m, \tau_{l_{\max}-1} \bmod m, \ldots, \tau_{l_2} \bmod m, \tau_{l_1} \bmod m)$
                        GoodStepCount := GoodStepCount + 1
                        \textbf{if} GoodStepCount $\geq |I| / (\delta \ln n)^{l_{\max}}$:
                            \textbf{if} RandInt[0,1] = 1:
                                \textbf{stop}.
\end{protocol}
\setcounter{figure}{0}
\caption{The sequence of time paths observed in the population at time steps of interval $I$ when all time paths of all agents are identical is w.h.p. a valid output of \emph{some} execution of the non-deterministic program with the pseudocode given above. $\gamma$ and $\delta$, with $0 < \gamma < \delta$ are constants, such that $\gamma$ can be chosen to be arbitrarily large and $\delta$ depends on the choice of $\gamma$. The routine RandInt returns an integer from the given interval, sampling each integer with (arbitrary) positive probability.}\label{fig:loopstructure}
\end{figure}

We remark that in Fig.~\ref{fig:loopstructure}, it is guaranteed that termination will only occur after at least $\geq |I| / (\delta \ln n)^{l_{\max}}$. This is because the interval between adjacent good steps does not exceed the time the outermost clock $C^{(l_{\max})}$ takes to traverse $6$ states, which is less than the length of its clock cycle.

\subsection{Deploying the Clock Hierarchy: Rules of the Compiled Protocol}\label{sec:compiledrules}
We design the rules of the compiled program, to be composed with the protocol running the clock hierarchy, so that agents select the current ruleset to be executed based solely on their timepath. Note that there exists a one-to-one correspondence between timepaths (different from $\perp$, which can be seen as an idle step) and leaves of tree $T$ identified by paths taken from the root of tree $T$ to the leaf; we will assume both timepaths and leaves are indexed by vectors $\tau = (\tau_{l_{\max}}, \ldots, t_{1})$, with $\tau_j \in \{1,\ldots, w_{\max}\}$. By convention, level $l_{\max}$ represents the outermost loop level of the code, and level $1$ the innermost. We proceed to identify leaves inductively: having fixed $(\tau_1, \ldots, \tau_{j-1})$, a value of $\tau_j$ corresponds to picking the $\tau_j$-th child (in top-down order of instructions) in the $j$-th-outermost loop of the code reachable along the path $(\tau_1, \ldots, \tau_{j-1})$. We denote by $R_{\tau}$ the ruleset associated with leaf $\tau$ (formally, leaf $\tau$ will contain the instruction \ASYNC[c \ln n] $R_\tau$).

\paragraph{Compilation procedure.} The entire process of compilation of the precompiled code tree is given as follows: for each $\tau = (\tau_{l_{\max}}, \ldots, t_{1})$, with $\tau_j \in \{1,\ldots, w_{\max}\}$, for each rule \RULE{\Sigma_1}{\Sigma_2}{\Sigma_3}{\Sigma_4} $\in R_{\tau}$, we add to the final protocol the rule:
\RULE{\Pi_\tau \wedge \Sigma_1}{\Pi_\tau \wedge \Sigma_2}{\Sigma_3}{\Sigma_4} where we recall: $\Pi_\tau = C^{(1)}_{4 \tau_1} \wedge \bigwedge_{j>1} C^{*(j)}_{4\tau_j}$.

\paragraph{Note on execution.} The execution of the rules over time for any agent is governed by Proposition~\ref{pro:exec}. The execution of interactions executing rules associated with any time path is always completed before execution of rules associated with the next timepath in the ordering from Fig.~\ref{fig:loopstructure} starts. Moreover, each timepath appears in all agents for at least $\delta n \ln n$ successive good steps, hence each ruleset is guaranteed to be repeated for a sufficiently long time to satisfy the constraints of the loops under a uniform scheduler on the population (choosing $\delta \geq c$). In other words, during good steps (which last sufficiently long), the applied filter $\Pi_\tau$ for the current time path $\tau$ corresponds precisely to the execution of the original ruleset $R_\tau$. (Note that directly before and after the good steps for time path $\tau$, ruleset $R_\tau$ may be executed by some subset of the agents in the population; this is tolerated behavior as per the specification of the programming language.) Finally, the order in which time paths appear (cf.~Fig.~\ref{fig:loopstructure}) is precisely the order of leaves in a correct sequential execution of rules in the tree.

The number of rounds taken to perform a complete iteration of the outermost infinite ``for'' loop (i.e., to iterate through all possible time paths) is at most poly logarithmic in $n$, given as $O((\delta\log n)^{l_{\max} + 1})$. This value asymptotically determines the required time the clocks need to operate correctly before stopping to achieve intended operation of the program.

\section{Programming Always-Correct Protocols}

\subsection{Exact Leader Election Protocol}
\label{leexact}

This Section is devoted to the analysis of protocol \fun{\LeaderElectionExact}, which is version of protocol \fun{\LeaderElection} modified to achieve correct computation with certainty.
\begin{figure}
\begin{protocol}
\PROTOCOL{\LeaderElectionExact}
\VAR{\OUTPUT{\INIT{L}{\on}}, \INIT{R}{\on}, \INIT{F}{\on}}

\THREAD{\Main}{L, \READONLY{R, F}}
    \LOCAL{\INIT{D}{\off}}
    \REPEAT
        \IFNONEMPTY{L}
            \GETS{D}{L \wedge F}
            \IFNONEMPTY{D}
                \GETS{L}{L \wedge D}
        \ELSE
            \GETS{L}{R}

\THREAD{\FilteredCoin}{F}
    \LOCAL{\INIT{I}{\on}, \INIT{S}{\on}}
    \ASYNC
        \RULE{I}{I}{\neg I \wedge S}{\neg I \wedge \neg S}
        \RULE{I}{\neg I}{\neg I}{\neg I}
        \RULE{S}{\neg S}{S \wedge F}{S \wedge F}
        \RULE{\neg S}{S}{\neg S \wedge F}{\neg S \wedge F}
        \RULE{F}{.}{\neg F}{.}

\THREAD{\ReduceSets}{R,L}
    \ASYNC
        \RULE{R}{R \wedge \neg L}{R}{\neg R \wedge \neg L}
        \RULE{R \wedge L}{R \wedge L}{R \wedge L}{\neg R \wedge \neg L}
\end{protocol}
\end{figure}

\begin{theorem}
\fun{\LeaderElectionExact} eventually elects a leader. For a protocol compilation following Theorem~\ref{th:guarantees}(ii)(b), a correct result with certainty is reached within $O(\poly(n))$ steps, w.h.p., starting from any reachable state of the protocol.
\end{theorem}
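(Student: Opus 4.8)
The plan is to analyze the three threads of \fun{\LeaderElectionExact} separately, establishing (1) that \fun{\ReduceSets} makes the variable $R$ a monotone ``safety net'' that always retains at least one agent, (2) that \fun{\FilteredCoin} supplies a coin which, although biased and correlated, is nontrivial infinitely often, and (3) that the interaction between $L$, $R$ and the Main thread drives the system to a single leader with certainty and, under the compilation of Theorem~\ref{th:guarantees}(ii)(b), does so quickly w.h.p.\ from any reachable configuration. The key structural invariant is that $\mathcal{R}\neq\emptyset$ at all times: the two rules of \fun{\ReduceSets} only ever unset $R$ on one of two meeting agents (and only when the other retains $R$), so $|\mathcal{R}|$ is non-increasing and bounded below by $1$. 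A symmetric observation about the Main thread is that $L$ is only ever unset on an agent when that agent has $\neg F$ (through \GETS{D}{L\wedge F} followed by \GETS{L}{L\wedge D}), and is only ever set (in the \ELSE\ branch) to the value of $R$; in particular $\mathcal{L}$ can never become empty \emph{and stay empty}, because the next time the outermost loop enters the \ELSE\ branch every agent copies $R$, restoring $\mathcal{L}\supseteq\mathcal{R}\neq\emptyset$. This mirrors the last paragraph of the w.h.p.\ proof of Theorem~\ref{lewhp}, but now \emph{without} relying on good iterations for the non-emptiness guarantee.

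Next I would argue \emph{eventual} correctness with certainty, using only fairness of the scheduler and ignoring the clocks. Define the potential $|\mathcal{L}|$. Whenever $|\mathcal{L}|\ge 2$, fairness guarantees that with positive probability the scheduler produces, within the current iteration of the outer loop, an execution in which: the \fun{\FilteredCoin} thread assigns $F=\on$ to a strictly nonempty strict subset of $\mathcal{L}$ (this is where the coin must be shown to be genuinely two-valued across agents with positive probability — see below), so that after \GETS{D}{L\wedge F} and \GETS{L}{L\wedge D} the set $\mathcal{L}$ shrinks; and meanwhile \fun{\ReduceSets} does not reduce $\mathcal{R}$ below what is needed. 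Hence from any reachable state there is a positive-probability path to a state with $|\mathcal{L}|$ strictly smaller, and once $|\mathcal{L}|=1$ the Main thread keeps it there (with one leader, $D$ never causes $L$ to be unset on that agent since the \IFNONEMPTY{D} guard together with \GETS{L}{L\wedge D} preserves it, and the \ELSE\ branch is never entered because $\mathcal{L}\neq\emptyset$). A standard fairness argument (the system is a finite Markov chain under the random scheduler, so any configuration reachable with positive probability from a recurrent class is reached infinitely often a.s.) then gives convergence to the unique-leader absorbing set with probability $1$.

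For the quantitative claim under Theorem~\ref{th:guarantees}(ii)(b): after the $O(n^{\eps})$-round initialization, we get $\Omega(\log n)$ consecutive good iterations. In a good iteration the Main thread behaves \emph{exactly} as the idealized pseudocode, so the multiplicative-drift computation from the proof of Theorem~\ref{lewhp} applies verbatim — provided the coin delivered by \fun{\FilteredCoin} in that iteration is an unbiased independent fair coin per agent w.h.p. So the technical heart is to show that \fun{\FilteredCoin}, run for $\Theta(\log n)$ rounds under the fair scheduler, sets $F$ on each agent independently to $\on$ or $\off$ with probability $\tfrac12\pm o(1)$, w.h.p. Here I would invoke the \emph{synthetic coin} technique of Alistarh et al.~\cite{DBLP:conf/soda/AlistarhAEGR17}: the pair of rules on $I$ extracts one raw coin bit $S$ per agent from the parity/order of its first interaction, and the two ``$S$ vs.\ $\neg S$'' rules run a short epidemic/mixing step that equalizes the bias of $S$ across the population before copying it into $F$; the $F\to\neg F$ rule ensures that stale values from a previous iteration are cleared. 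One must check that $\Theta(\log n)$ rounds suffice for the synthetic-coin bias to drop below any fixed constant w.h.p., which is exactly the regime analyzed in~\cite{DBLP:conf/soda/AlistarhAEGR17}. Combining: in each of the $\Omega(\log n)$ good iterations the drift reduces $|\mathcal{L}|$ by a constant factor (down to $1$) w.h.p., so $|\mathcal{L}|=1$ after $O(\log n)$ good iterations, i.e.\ $O(\poly\log n)$ rounds after the $O(n^{\eps})$ initialization; meanwhile the always-running \fun{\ReduceSets}/$R$ machinery guarantees the answer is also correct in the adversarial tail, so a correct result with certainty is reached within $O(\poly(n))$ steps w.h.p.\ from any reachable state.

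The main obstacle I anticipate is the coupling between the three threads during \emph{non}-good iterations: one must be sure that the ad-hoc use of $R$ as a backup can never be corrupted (e.g.\ the \ELSE\ branch setting $L:=R$ at a bad time should not permanently destroy progress), and that the biased, temporally-correlated coin produced outside good iterations cannot push $|\mathcal{L}|$ \emph{up} (it cannot: $L$ is only set in the \ELSE\ branch, which requires $\mathcal{L}=\emptyset$, and then it is set to $R$, whose size is non-increasing). Making these monotonicity statements airtight against the guaranteed-behavior semantics of the compiled protocol (Definition of guaranteed behavior, part two) — in particular that rulesets of a branch are only executed by agents with the right $Z_{(\ID )}$ flag — is the delicate bookkeeping step, but it is bookkeeping rather than new ideas.
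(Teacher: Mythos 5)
There is a genuine gap, and it lies in how you model \fun{\FilteredCoin}. You treat it as a perpetual synthetic coin that is ``nontrivial infinitely often'' and base your eventual-correctness argument on the existence, from any reachable state with $|\mathcal{L}|\ge 2$, of a positive-probability path in which $F$ is set on a proper nonempty subset of $\mathcal{L}$. That path does not exist in general: $\mathcal{I}$ depletes permanently, the two ``$S$ vs.\ $\neg S$'' rules \emph{copy} the first agent's $S$-value onto the second, so $\mathcal{S}$ reaches consensus, after which the only rules that can set $F$ can never fire again and the rule \RULE{F}{.}{\neg F}{.} clears $F$ for good. So $\mathcal{F}=\emptyset$ permanently, hence $\mathcal{D}=\emptyset$ permanently, and the \fun{\Main} thread \emph{never again reduces} $\mathcal{L}$. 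This death of the coin is not a defect to be worked around --- it is the linchpin of the certainty argument, because (together with the second guaranteed-behavior property applied to the branch ``\IFNONEMPTY{D}'') it is what freezes $L$ against further updates of the form $L:=L\wedge D$ and makes the final configuration stable.

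Consequently the deterministic election of a unique leader must come from \fun{\ReduceSets}, not from the coin: its rules drive $|\mathcal{R}|$ down to exactly $1$ with certainty, and one checks the invariant $\mathcal{L}\subseteq\mathcal{R}$ (an agent can lose $R$ only if it simultaneously loses $L$ or already had $\neg L$, and $L$ is only ever set to the local value of $R$), so once $\mathcal{R}=\{a\}$ and $\mathcal{F}=\emptyset$ permanently, the next good iteration either already has $\mathcal{L}=\{a\}$ or restores it via the \key{else} branch $L:=R$, after which no rule can change $\mathcal{L}$. Your proposal contains the right ingredients for the fast w.h.p.\ phase (which in fact belongs to the companion running-time theorem, where the coin is shown to be constant-biased for $\Omega(n)$ rounds, not fair or perpetual), and your monotonicity observations about $\mathcal{R}$ are correct, but the argument for correctness \emph{with certainty} needs to be inverted: stability follows from the coin dying, and uniqueness follows from \fun{\ReduceSets} plus $\mathcal{L}\subseteq\mathcal{R}$, not from a fairness argument about the coin.
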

\begin{proof}
Eventually (w.h.p. after at most polynomial time), $\set F = \emptyset$ by thread \fun{\FilteredCoin}, and set $\set F$ never changes again.
Eventually (w.h.p. after at most polynomial time), $|\set R| = 1$ by thread \fun{\ReduceSets}, and set $\set R$ never changes again (e.g., $\set R = \{a\}$, for some agent $a$ in the population).
At the end of the next good iteration after the above, in thread \fun{\Main}, we have $\set D = \emptyset$ and set $\set D$ never changes again.
At the end of the next good iteration after the above, we have $\set L = \set R = \{a\}$.
Any successive update to $L$ is of the form \GETS{L}{R}. It follows that $\set L = \{a\}$ subsequently always holds.
\end{proof}

\begin{theorem}
\fun{\LeaderElectionExact} elects a leader in $O(\log^2 n)$ parallel rounds w.h.p., counting from the end of the initialization phase from  Theorem~\ref{th:guarantees}(ii)(b).
\end{theorem}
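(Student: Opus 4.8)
The plan is to combine the w.h.p.\ analysis of \fun{\LeaderElection} (Theorem~\ref{lewhp}) with a bound on how long it takes the auxiliary threads \fun{\FilteredCoin} and \fun{\ReduceSets} to stabilize. Recall that under the compilation scheme of Theorem~\ref{th:guarantees}(ii)(b), after an $O(n^\eps)$ initialization phase we are guaranteed $\Omega(\log n)$ consecutive good iterations, each lasting $O(\poly\log n)$ rounds; since \fun{\LeaderElectionExact} has no nested \key{repeat} loops, each such iteration in fact lasts $O(\log n)$ rounds. So it suffices to show that \emph{within} the first $O(\log n)$ of these good iterations, the thread \fun{\Main} has already converged to $\mathcal{L} = \{a\}$ for a single agent $a$, and that $\mathcal L$ never changes thereafter.

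First I would bound the convergence of the two auxiliary threads, \emph{which run continuously rather than being gated by the clock hierarchy} (they are \ASYNC\ blocks in the outermost \key{repeat}, i.e.\ plain rulesets). Thread \fun{\FilteredCoin} unsets $F$ for every agent via the rule \RULE{F}{.}{\neg F}{.}; this is a one-way epidemic-style elimination and reaches $\mathcal F = \emptyset$ in $O(\log n)$ parallel rounds w.h.p.\ (after which $F$ stays empty forever, so the \GETS{D}{L\wedge F} assignment always yields $\mathcal D=\emptyset$). Thread \fun{\ReduceSets} performs \RULE{R}{R\wedge\neg L}{R}{\neg R\wedge\neg L} together with \RULE{R\wedge L}{R\wedge L}{R\wedge L}{\neg R\wedge\neg L}: whenever $\mathcal L$ is nonempty the first rule alone suffices and drives $|\mathcal R|$ down like the classic $\frac{d}{dt}|\mathcal R| = -(|\mathcal R|/n)\cdot(\text{frac.\ not in }\mathcal L)$ elimination, reaching $|\mathcal R|=1$ in $O(\log n)$ rounds w.h.p.\ (cf.\ the analysis style of Proposition~\ref{k-stanowa-dynamika}); the second rule is only needed in the degenerate case $\mathcal L = \mathcal R$, and a short case analysis shows it too stabilizes $|\mathcal R|=1$ in $O(\log n)$ rounds. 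Hence, w.h.p., after the first $O(\log n)$ good iterations (which collectively span $O(\log^2 n)$ rounds) we have $\mathcal F = \emptyset$ and $|\mathcal R| = 1$ permanently.

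Next I would track thread \fun{\Main}. By Theorem~\ref{lewhp}, after $O(\log n)$ good iterations the number of agents with $L$ set has dropped to $1$, w.h.p., via the multiplicative-drift argument on $\ell_i$; note that in \fun{\LeaderElectionExact} the coin comes from $F$, which during the first $O(\log n)$ good iterations still behaves as a fair coin for those agents that have not yet had $F$ unset — this needs a small remark, namely that the drift argument only requires the coin to be fair for \emph{enough} iterations, and the $O(\log n)$-round stabilization of \fun{\FilteredCoin} is still long enough to supply the $\Theta(\log n)$ fair coin flips the drift theorem consumes. (Alternatively, and more cleanly, I would reorganize the counting: once $\mathcal F=\emptyset$, the \GETS{D}{L\wedge F} step makes $\mathcal D = \emptyset$, \fun{\Main} enters its fixed behavior, and then the only remaining reductions of $\mathcal L$ come from the \ELSE\ branch / the interplay with $\mathcal R$; combined with $|\mathcal R|=1$ this pins $\mathcal L$ to a single agent within one further good iteration as in the proof of the preceding theorem.) Either way, the total is $O(\log n)$ good iterations, hence $O(\log^2 n)$ parallel rounds.

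The main obstacle I anticipate is the \textbf{bookkeeping of which threads are clock-gated and which are not}, and the resulting subtlety about the coin source: \fun{\Main} is executed only during good iterations (gated by the phase-clock hierarchy), whereas \fun{\FilteredCoin} and \fun{\ReduceSets} run freely at full rate, so their $O(\log n)$-\emph{round} convergence corresponds to much less than one good iteration. One must argue carefully that by the time \fun{\Main} has used up its $\Theta(\log n)$ good iterations of drift, $F$ has \emph{not} yet vanished for too many agents too early (which would break fairness of the coin) — or, conversely, sequence the argument so that $F$ vanishing only \emph{helps} (it forces $\mathcal D=\emptyset$, which together with $|\mathcal R|=1$ immediately yields $\mathcal L=\mathcal R$). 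I would adopt the latter, cleaner order: show (1) $|\mathcal R|=1$ and $\mathcal F=\emptyset$ within $O(\log^2 n)$ rounds w.h.p.; (2) one further good iteration makes $\mathcal D=\emptyset$ forever; (3) one further good iteration makes $\mathcal L=\mathcal R=\{a\}$; (4) all subsequent updates are \GETS{L}{R}, so $\mathcal L=\{a\}$ is preserved — exactly mirroring the always-correct proof but now with explicit $O(\log n)$ bounds on steps (1)–(3), for a total of $O(\log^2 n)$ rounds w.h.p.
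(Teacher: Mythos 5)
Your plan rests on a misreading of the two auxiliary threads, and both of the quantitative claims you make about them are false. First, \fun{\FilteredCoin} does \emph{not} drive $\mathcal{F}$ to $\emptyset$ in $O(\log n)$ rounds: the rules \RULE{S}{\neg S}{S\wedge F}{S\wedge F} and \RULE{\neg S}{S}{\neg S\wedge F}{\neg S\wedge F} continually re-set $F$ in pairs whenever an $S$-agent meets a $\neg S$-agent, while \RULE{F}{.}{\neg F}{.} removes one $F$ at a time. Since $|\mathcal{S}|$ performs an unbiased random walk started at a constant fraction of $n$, the birth rate of $F$ stays $\Theta(n)$ per round for $\Omega(n)$ rounds, and $|\mathcal{F}|$ equilibrates at a constant fraction of the population (the paper shows $\frac{15}{158}n\le|\mathcal{F}|\le\frac{9}{10}n$). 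This is the whole point of the thread: $F$ is a \emph{synthetic coin} with success probability bounded in $[\frac{1}{11},\frac{10}{11}]$, refreshed per agent at rate $\Theta(1/n)$ per interaction, so that snapshots of $\mathcal{F}$ taken $\Omega(\log n)$ rounds apart are essentially independent. The intended proof runs the multiplicative-drift argument of Theorem~\ref{lewhp} with this biased coin (giving $\mathbb{E}[\ell_{i+1}\mid\ell_i]\le\frac{111}{121}\ell_i$ for $\ell_i\ge 2$), consuming one fresh coin per good iteration for $\Theta(\log n)$ iterations, i.e.\ $O(\log^2 n)$ rounds in total. If $\mathcal{F}$ really vanished early, as you assume, then \GETS{D}{L\wedge F} would always yield $\mathcal{D}=\emptyset$, the inner branch of \fun{\Main} would never fire, and $\mathcal{L}$ could only shrink through \fun{\ReduceSets} --- which would destroy the polylogarithmic bound rather than ``help''.

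Second, \fun{\ReduceSets} does \emph{not} reach $|\mathcal{R}|=1$ in $O(\log n)$ rounds. Its first rule requires the \emph{responder} to be in $\mathcal{R}$ as well (it must match $R\wedge\neg L$), so per interaction $|\mathcal{R}|$ decreases with probability $O((|\mathcal{R}|/n)^2)$; this is the quadratic elimination of Proposition~\ref{simpleelimination}, which needs $\Theta(n)$ parallel rounds to reduce $\mathcal{R}$ to constant size. That is deliberate: \fun{\ReduceSets} is the slow, deterministically safe backup guaranteeing $|\mathcal{R}|\ge 1$ and eventual exactness; in the $O(\log^2 n)$ w.h.p.\ argument its only role is the observation that it never empties $\mathcal{L}$ and can only accelerate its shrinkage. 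Consequently your ``cleaner order'' (1)--(4) cannot be repaired by tightening constants: step (1) is unattainable in $O(\log^2 n)$ rounds, and the fast convergence must instead come from the coin-driven halving in \fun{\Main}, exactly as in Theorem~\ref{lewhp}, with the additional work of establishing that $F$ is a usable coin --- boundedly biased and sufficiently independent across iterations --- throughout the $\Theta(\log n)$ good iterations needed.
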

\begin{proof}
We remark that, following the above analysis, the correct result with certainty is reached in expected polynomial time. Thus, to asymptotically bound the expected time until the correct result is reached, it suffices to show that the Main thread reaches a correct result w.h.p. after at most a poly-logarithmic number of good iterations of its outer loop. The bound on the expected time until a correct result is reached will then follow from the promise of the framework.

First, we observe that there is always $|\mathcal{R}| \ge 1$, by the rules of thread \fun{\ReduceSets}. We now analyze thread \fun{\FilteredCoin}. Set $\mathcal{I}$ starts at size $n$ and eventually becomes empty set (with high probability in $O(\log n)$ rounds). Consider first $n/2$ nodes eliminated from $\mathcal{I}$. While $|\mathcal{I}| \ge n/2$, rule from line 17 is at least as probable to trigger as rule from line 18. Thus, with very high probability there are at least $n/8$ triggers of line 17, adding $n/8$ elements to $\mathcal{S}$. On the other hand, line 17 was triggered at most $n/2$ times. By the lines 19 and 20, $\mathcal{S}$ with the same probability increases by one element and decreases by one element, from which we have that for the next $\Omega(n)$ rounds there is $\frac{n}{16} \le |\mathcal{S}| \le \frac{15}{16}n$. Denote by $f_i$ the size of set $F$ in the step $i$. There is $\mathbb{E}[ f_{i+1} - f_{i} | f_{i}] \le \frac25 (1-\frac{f_i}{n}) \cdot 2 - \frac15 \frac{f_i}{n} = \frac{4}{5} - \frac{f_i}{n}$, where the bound follows from upperbounding the probability of rules 19 and 20 successfully adding node to $F$. Similarly, $\mathbb{E}[ f_{i+1} - f_{i} | f_{i} ] \ge \frac{2}{5} \cdot \frac{15}{16} \cdot \frac{1}{16} \cdot 2 \cdot (1-\frac{f_i}{n}) - \frac15\frac{f_i}{n} = \frac{3}{64} - \frac{79}{320}\frac{f_i}{n}$. Thus the fixed point is between $\frac{15}{79}n$ and $\frac{4}{5}n$. Thus, by Azuma's inequality, with very high probability $\frac{15}{158}n \le |F| \le \frac{9}{10}n$, starting from round $O(\log n)$, for at least $\Omega(n)$ rounds. Moreover, for any agent $x$, by the above analysis, there is in each interaction a $\Theta(1/n)$ probability of $x \in \mathcal{F}$ or $x \not\in \mathcal{F}$ being overwritten by an application of one of the rules from lines 19--21, which translates to high probability over any $\Omega(\log n)$ rounds. Thus if we consider $\mathcal{F}_0, \mathcal{F}_1, \mathcal{F}_2, \ldots$ being set $\mathcal{F}$ in a consecutive round snapshots, then for $i,j = \Omega(\log n)$ such that $i - j \ge t = c \log n$, the sets $\mathcal{F}_i$ and $\mathcal{F}_j$ are independent, conditioned on an event which holds with high probability $1 - 1/n^{\Omega(c)}$.

We consider now the first good iteration. If $\mathcal{L}$ is empty, it will be set to $\mathcal{R}$ in the next iteration, and $\mathcal{R}$ is nonempty. Thus w.l.o.g., assume $\mathcal{L}$ is non-empty. Denote by $\ell_i$ the size of $\mathcal{L}$ in $i$-th iteration after this one. Denote by $\frac{1}{11} \le  p_i \ge \frac{10}{11}$ the probability of success of the synthetic coin $\mathcal{F}$ in $i$-th iteration.  There is $\mathbb{E}[\ell_{i+1} | \ell_i] \le \ell_i \cdot p_i + (1-p_i)^{-\ell_i} \cdot \ell_i$, so for $\ell_i \ge 2$ there is $\mathbb{E}[\ell_{i+1} | \ell_i] \le \frac{111}{121} \ell_i$. It follows that for some $t = c \log n$ for large enough constant $c$, $\ell_t = 1$ with high probability.
We have that $O(c \log n)$ uncorrelated sets $\mathcal{F}$ are enough to narrow down $\mathcal{L}$ to size 1. Thus, in $O(\log^2 n)$ rounds this happens with high probability, as every $\mathcal{F}_{i \cdot c \log n}$, $i \ge t$ for $t$ large enough constant is uncorrelated with high probability, and all the additional $\mathcal{F}_{j}$ for $i c \log n < j < (i+1) c \log n$ only possibly speed up the process. To complete the proof, we observe that line 26 in thread \fun{\ReduceSets} only speeds up the process of narrowing down $\mathcal{L}$, while never reducing $\mathcal{L}$ to empty set. Additionally, if at some point $|\mathcal{L}| = 1$, then in $O(n)$ rounds in expectation, $|\mathcal{R}| = 1$ by the line 25, and after that $\mathcal{R}$ remains constant.
\end{proof}

\subsection{Exact Majority Protocol}
\label{majexact}

This Section is devoted to the analysis of protocol \fun{\MajorityExact}, provided in the form of pseudocode, being a version of protocol \fun{\Majority} adapted to achieve correctness with certainty.

\begin{figure}
\begin{protocol}
\PROTOCOL{\MajorityExact}
\VAR{\OUTPUT{Y_A}, \INPUT{A, B}}

\THREAD{\Main}{Y_A, {A, B}}
    \LOCAL{\INIT{A^*}{\off}, \INIT{B^*}{\off}, \INIT{K}{\off}}
    \REPEAT
        \ASYNC[c \ln n]
            \RULE{A}{B}{\neg A}{\neg B}
        \GETS{A^*}{A}
        \GETS{B^*}{B}
        \REPEAT[c \ln n]
            \ASYNC[c \ln n]
                \RULE{A^*}{B^*}{\neg A^*}{\neg B^*}
            \GETS{K}{\off}
            \ASYNC[c \ln n]
                \RULE{A^* \wedge \neg K}{\neg A^* \wedge \neg B^*}{A^* \wedge K}{A^* \wedge K}
                \RULE{B^* \wedge \neg K}{\neg A^* \wedge \neg B^*}{B^* \wedge K}{B^* \wedge K}
        \IFNONEMPTY{A^*}
            \GETS{Y_A}{\on}
        \IFNONEMPTY{B^*}
            \GETS{Y_A}{\off}
\end{protocol}

\end{figure}

\begin{theorem}
Protocol \fun{\MajorityExact} eventually correctly computes majority.
The result is obtained in $O(\log^3 n)$ parallel rounds w.h.p., counting from the end of the initialization phase from  Theorem~\ref{th:guarantees}(ii)(b).
\end{theorem}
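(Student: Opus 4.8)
The strategy is to decouple the two assertions. Eventual correctness will follow from the monotone, scheduler-oblivious action of the new instruction on line~8 ($\RULE{A}{B}{\neg A}{\neg B}$), while the $O(\log^3 n)$ bound will follow by replaying the analysis of \fun{\Majority} (Theorem~\ref{th:majority_runtime}) inside a single good iteration supplied by Theorem~\ref{th:guarantees}(ii)(b). Throughout I assume, w.l.o.g., that $|\mathcal A|<|\mathcal B|$ at initialization (ties are excluded by the problem statement), write $\Delta:=|\mathcal B|-|\mathcal A|\ge 1$, and note that a correct output is $Y_A=\off$ for all agents.

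First I would record three invariants that hold in every reachable configuration, regardless of the scheduler and of whether the clocks are synchronized. (a) The only rule modifying $A$ or $B$ is the one on line~8; it never sets them, fires only when an $\mathcal A$-agent meets a $\mathcal B$-agent, and each firing decrements both $|\mathcal A|$ and $|\mathcal B|$. Hence $|\mathcal A|,|\mathcal B|$ are non-increasing, $|\mathcal B|-|\mathcal A|=\Delta$ forever, and in particular a $\mathcal B$-agent is present whenever $\mathcal A\neq\emptyset$. (b) Following the provenance of $A^*$ through the precompilation of Section~\ref{sec:compilation}, $A^*$ can be \emph{set} for an agent only by the assignment \GETS{A^*}{A} --- and then only for agents in $\mathcal A$ --- or by the amplification rule $\RULE{A^* \wedge \neg K}{\neg A^* \wedge \neg B^*}{A^* \wedge K}{A^* \wedge K}$ --- and then only if some $\mathcal A^*$-agent already exists; the extra $\Pi_\tau$, $Z_{(\cdot)}$, $K_{(\cdot)}$ conjuncts introduced by compilation only strengthen preconditions and do not affect this. (c) Consequently, once $\mathcal A=\emptyset$ and $\mathcal A^*=\emptyset$, the set $\mathcal A^*$ stays empty.

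For \emph{eventual correctness} I would argue in two stages. Under the compilation of Theorem~\ref{th:guarantees}(ii)(b) the control state satisfies $|\mathcal X|\ge 1$ at all times (Proposition~\ref{simpleelimination}), so the clock hierarchy keeps operating correctly for $e^{\Omega(n)}$ rounds w.h.p.\ and good iterations recur; in each good iteration line~8 runs for $c\ln n$ fair rounds. By invariant~(a), whenever $\mathcal A\neq\emptyset$ a cancellation is possible, so fairness plus monotonicity forces, with probability $1$, a time $t^*$ after which $\mathcal A=\emptyset$ and $|\mathcal B|=\Delta\ge 1$ permanently, with $\mathbb E[t^*]=O(\poly(n))$ (a meeting-time estimate, dominated by the last cancellations). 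From the first good iteration after $t^*$, the assignment \GETS{A^*}{A} correctly yields $\mathcal A^*=\emptyset$, so by invariant~(c) $\mathcal A^*$ is empty from then on; by the guaranteed-behaviour property the flag guarding the branch ``\IFNONEMPTY{A^*}'' becomes and stays empty, so \GETS{Y_A}{\on} is never executed again; and the next good iteration executes \GETS{Y_A}{\off} for all agents. Since \GETS{Y_A}{\off} can only unset $Y_A$, from this point $Y_A=\off$ holds forever. The residual probability that the clocks fail before all this happens is absorbed by restarting the argument from the current reachable configuration, giving convergence with probability $1$ and within $O(\poly(n))$ steps w.h.p.\ from any reachable state, matching the theorem.

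For the $O(\log^3 n)$ bound I would take the first good iteration after the $O(n^{\eps})$ initialization phase of Theorem~\ref{th:guarantees}(ii)(b) and run the proof of Theorem~\ref{th:majority_runtime} in it essentially verbatim: line~8 only executes the cancellation rule on $\mathcal A,\mathcal B$ for $c\ln n$ fair rounds before the copies $A^*,B^*$ are taken, and by invariant~(a) it leaves $|\mathcal B|-|\mathcal A|=\Delta>0$ intact while only shrinking the sets, so $|\mathcal A^*|<|\mathcal B^*|$ still holds and the doubling-and-cancelling analysis of the inner \key{repeat} loop applies unchanged, ending the iteration with $\mathcal A^*=\emptyset$, $\mathcal B^*\neq\emptyset$ and hence $Y_A=\off$ for all agents, w.h.p. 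With a single nested loop, each iteration is realized in $O(\log^2 n)$ parallel rounds, and the framework guarantees $\Omega(\log n)$ good iterations right after the initialization phase, each re-deriving the same value; hence $Y_A$ carries the correct answer within $O(\log^3 n)$ rounds counted from the end of that phase, w.h.p. I expect the main obstacle to be the second stage of the correctness argument --- ruling out, through the pre-stabilization and rare-failure regimes, any scenario in which $\mathcal A^*$, and with it the spurious write $Y_A:=\on$, survives indefinitely --- which depends entirely on the ``$A^*$ is never created from nothing'' bookkeeping of invariant~(b) surviving the assignment and branching precompilation, together with line~8 driving $\mathcal A$ to $\emptyset$ irrevocably.
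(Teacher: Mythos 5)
Your proposal is correct and follows essentially the same route as the paper: drive $\mathcal A$ to the empty set via the added cancellation rule while the gap is preserved, conclude that $\mathcal A^*$ is permanently empty after the next good iteration, invoke the guaranteed-behaviour property to show the branch writing $Y_A:=\on$ is never entered again so $Y_A=\off$ holds forever, and inherit the $O(\log^3 n)$ bound from the analysis of Theorem~\ref{th:majority_runtime}. Your explicit invariants (a)--(c) merely spell out bookkeeping the paper leaves implicit.
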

\begin{proof}
Suppose w.l.o.g.\ that initially $|\set A| < |\set B|$.
Eventually (w.h.p. after polynomial time), $|\set A| = 0$ 
and set $\set A$ never changes again. At the end of the next good iteration after the above 
we have $\set A^* = \emptyset$ and set $\set A^*$ never changes again. At the end of the next good iteration after the above, we have $\set Y_A = \emptyset$ and $\set Z_{(18)} = \emptyset$. Set $\set Z_{(3)}$ never changes again. Any successive valuation of $Z_{(18)}$ in the compilation of line ``$\IFNONEMPTY{A^*}$'' always returns false, hence the branch of the program containing the line \GETS{Y_A}{\on} is never entered. It follows that any successive update to $Y_A$ is of the form \GETS{Y_A}{\off}, ans so $\set Y_A = \emptyset$ subsequently always holds. The case of $|\set A| > |\set B|$ is analogously analyzed, reaching $\set Y_A$ equal to the entire population.

For a protocol compilation following Theorem~\ref{th:guarantees}(ii)(b), a correct result with certainty is reached within $O(\poly(n))$ steps, w.h.p., starting from any reachable state of the protocol.

The expected time it takes for the protocol to compute majority follows from the analysis identical to the one done in Theorem~\ref{th:majority_runtime}.
\end{proof}

\subsection{Computing Arbitrary Semi-linear Predicates}

\label{general-semi-linear}

Generalizing the approach used in the exact solution to majority, we can also solve arbitrary semi-linear predicates. For a fixed semi-linear predicate\footnote{For simplicity and w.l.o.g., we consider binary-valued predicates, only.} $\Pi$ on a finite set of input states $A_1, A_2, \ldots$, we achieve this by combining two other population protocols for computing $\Pi$ as black boxes. The first of these protocols, given in~\cite{DBLP:journals/dc/AngluinAE08a}, has the property that for a population with a distinguished leader state and a given unique leader ($|\mathcal{L}=1|$), after $O(\log^5 n)$ rounds of computation, it writes the value of $\Pi(\mathcal{A}_1,\mathcal{A}_2,\ldots)$ to the output of all agents, w.h.p. of correctness. We call it the \emph{fast} blackbox. Moreover, we use the generic technique for exactly and stably computing in expected polynomial time the value of $\Pi(\mathcal{A}_1,\mathcal{A}_2,\ldots)$ to the output of all agents~\cite{DBLP:journals/dc/AngluinADFP06}. This is called the \emph{slow} blackbox. The slow blackbox uses output states $(P_D^1, P_D^0)$, at most one of which is set for each agent, and stable computation means that if state $P_D^i$ is set for all agents, $i\in \{0,1\}$, then $i$ represents the true value of $\Pi(\mathcal{A}_1,\mathcal{A}_2,\ldots)$ .

The fast blackbox and slow blackbox are put together in separate threads with the threads of protocol \fun{\LeaderElectionExact}, to achieve protocol \fun{\SemilinearExact}. Note that the execution of the ruleset over $\geq (c \ln n)^5$ rounds should be read as $5$ nested loops over $\geq (c \ln n)$ rounds.

\begin{figure}[!!!ht]
\begin{protocol}
\PROTOCOL{\SemilinearExact} \{for predicate $\Pi$\}
\VAR{\OUTPUT{\INIT{P}{\on}}, \INPUT{A_1, A_2,\ldots}, \INIT{L}{\on}, \INIT{P_D}{\on}}

\{Import all threads of protocol \LeaderElection on variable $L$\}

\THREAD{SemiLinear}{P,\READONLY{L,P_D^0,P_D^1}}
\LOCAL{P^*}
    \REPEAT
        \{Reset all state variables of fast blackbox to initial settings using \GETS \}
        \ASYNC[(c \ln n)^5]
            $\triangleright$ \{Fast blackbox: w.h.p. compute $\Pi(\mathcal{A}_1,\mathcal{A}_2,\ldots)$ as $P^*$, using $\cal L$ as leader(s)\}
        \IFNONEMPTY{P^*}
	        \IFNONEMPTY{\neg P_D^0}        
		        \IFNONEMPTY{\neg P}        
    	            \GETS{P}{\on}
        \IFNONEMPTY{\neg P^*}
	        \IFNONEMPTY{\neg P_D^1}        
		        \IFNONEMPTY{P}        
    	            \GETS{P}{\off}
\THREAD{SemiLinearSlow}{P_D^0,P_D^1}
	\ASYNC
		$\triangleright$ \{Slow blackbox: deterministically compute $\Pi(\mathcal{A}_1,\mathcal{A}_2,\ldots)$ as $(P_D^0,P_D^1)$\}

\end{protocol}
\end{figure}

\begin{theorem}
Protocol \fun{\SemilinearExact} eventually correctly computes $\Pi(\mathcal{A}_1,\mathcal{A}_2,\ldots)$.
The result is obtained in $O(\log^5 n)$ parallel rounds w.h.p., counting from the end of the initialization phase from Theorem~\ref{th:guarantees}(ii)(b).
\end{theorem}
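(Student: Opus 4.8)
The plan is to follow the same two‑part template used for \fun{\MajorityExact}: an eventual‑correctness‑with‑certainty argument that rests only on the monotone/structural behaviour of the composed threads, followed by a w.h.p.\ running‑time argument obtained by substituting the quantitative guarantees of the three ingredients (leader election in $O(\log^2 n)$ rounds, the fast blackbox of~\cite{DBLP:journals/dc/AngluinAE08a} in $O(\log^5 n)$ rounds, the slow blackbox of~\cite{DBLP:journals/dc/AngluinADFP06} stably in expected polynomial time). By symmetry I would fix $\Pi(\mathcal A_1,\mathcal A_2,\ldots)=1$ and remark at the end that the case $\Pi=0$ follows by exchanging $P_D^0\leftrightarrow P_D^1$, $P^*\leftrightarrow\neg P^*$, and the two assignments to $P$. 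First I would record the facts inherited from the subprotocols, which hold because \fun{\SemilinearExact} imports the threads of \fun{\LeaderElectionExact} unchanged and composes the two blackbox threads on top of shared, read‑only input variables: with certainty the population reaches, and then keeps, a configuration with $|\mathcal L|=1$ (w.h.p.\ within $O(\log^2 n)$ rounds of the initialization phase); and the slow blackbox, being stable, with certainty reaches and keeps a configuration in which $P_D^1$ is set for every agent and $\mathcal P_D^0=\emptyset$.

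For the certainty part, the point I would isolate is that the nested ``if exists'' guards in the \fun{SemiLinear} thread make the update of the output $P$ conservative with respect to the slow blackbox. Since that blackbox is stable and $\Pi=1$, $P_D^0$ can never be set for all agents, so the guard $\neg P_D^0$ always holds and the first conditional block reduces to ``if $\mathcal P^*\neq\emptyset$ and $\mathcal P$ is not the whole population, then set $P$ everywhere''. For the second block I would invoke the guaranteed‑behaviour property: from the (certain) time $t_0$ at which $\mathcal{\neg P_D^1}=\emptyset$ permanently, any agent then executing code outside the ``\IFNONEMPTY{\neg P_D^1}'' branch never re‑enters it, so after some $t_1>t_0$ the assignment \GETS{P}{\off} is never executed again and hence $|\mathcal P|$ is nondecreasing. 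It then remains to exhibit one later moment at which $\mathcal P$ becomes the whole population: at the end of the first good iteration after $t_1$ in which $|\mathcal L|=1$ already holds, the fast blackbox is run for $(c\ln n)^5$ rounds with a unique leader and, by~\cite{DBLP:journals/dc/AngluinAE08a}, computes $P^*=\Pi=1$ for all agents w.h.p., whereupon the first conditional block fires. Upgrading ``w.h.p.'' to probability $1$ is then the standard renewal argument: from every reachable configuration the dynamics reach within expected polynomial time a ``committed'' configuration --- with $|\mathcal L|=1$, the slow blackbox at its correct stable output, and $\mathcal P$ the whole population --- which by the monotonicity just established is never left and in which all agents output $\Pi$ forever, so such a configuration is reached with probability $1$.

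For the $O(\log^5 n)$ bound I would count rounds from the end of the Theorem~\ref{th:guarantees}(ii)(b) initialization phase, using the promised $\Omega(\log n)$ consecutive good iterations. A single pass through the body of the \fun{SemiLinear} \key{repeat} loop takes $O(\log^5 n)$ rounds: an $O(\log n)$‑round reset, the $\Theta(\log^5 n)$‑round invocation of the fast blackbox, and a constant number of $O(\log n)$‑round ``if exists'' evaluations and assignments --- the compilation of the nested guards adds only a constant number of sequential $\Theta(\log n)$‑round phases, not extra loop depth. Within the first good iteration \fun{\LeaderElectionExact} has already reached $|\mathcal L|=1$ (this costs $O(\log^2 n)$ rounds, w.h.p.), so in the next good iteration the fast blackbox runs with a unique leader, outputs $P^*=\Pi$ for all agents w.h.p., and the conditional blocks set $\mathcal P$ to the whole population exactly as in the certainty argument. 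Hence after $O(\log^5 n)$ rounds all agents carry the correct output, w.h.p.; a union bound over the remaining $\Omega(\log n)$ good iterations shows the fast blackbox keeps returning $\Pi$, so $\mathcal P$ does not change, while the certainty argument guarantees the output is in any case eventually frozen at the true value. This is the claimed convergence time.

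The hard part is the certainty argument rather than any calculation: one must check, using precisely the guaranteed‑behaviour property and the particular nesting of the guards on $P^*$, $P_D^0$, $P_D^1$ and $P$, that the only‑w.h.p.‑correct fast blackbox can never overwrite the output with a wrong value once the always‑correct slow blackbox has committed, while still being allowed to install the correct value quickly beforehand. A smaller but genuine obstacle is to verify that the fast blackbox of~\cite{DBLP:journals/dc/AngluinAE08a} can legitimately be treated as a subroutine that is re‑initialized at the top of each iteration and fed the leader produced by the concurrently running threads of \fun{\LeaderElectionExact}, so that the leader‑driven subroutine and the leader‑election threads do not interfere.
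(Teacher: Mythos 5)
Your proposal is correct and follows essentially the same route as the paper: slow blackbox commits stably, the $\neg P_D^0$/$\neg P_D^1$ guards together with the guaranteed-behaviour property make updates to $P$ one-sided thereafter, and one good iteration of the fast blackbox with the elected leader installs the correct value within $O(\log^5 n)$ rounds w.h.p. You are in fact somewhat more explicit than the paper about the monotonicity of $\mathcal P$ and the renewal argument upgrading w.h.p.\ to certainty, and you correctly flag the same delicate point the paper addresses only by assertion, namely that the imperfectly synchronized restart of the leader-driven blackbox of~\cite{DBLP:journals/dc/AngluinAE08a} costs only an extra $O(\log n)$.
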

\begin{proof}
Eventually (w.h.p. after polynomial time), we have $P_D^i$ and $\neg P_D^{1-i}$ correctly and permanently set for all agents, where $i\in \{0,1\}$ is given as  $i = \Pi(\mathcal{A}_1,\mathcal{A}_2,\ldots)$, by the properties of the slow blackbox. By the analysis of protocol \fun{\LeaderElection}, eventually (w.h.p. after polynomial time), we have that $|\mathcal{L}| =1$. At the end of the next good iteration after both of the above conditions are reached, in thread \fun{SemiLinear}, we either have that (1) $P$ and $\neg P_D^0$ are set for all agents, or (2) $\neg P$ and $P_D^0$ are set for all agents (in which two cases we have converged to $P = \Pi(\mathcal{A}_1,\mathcal{A}_2,\ldots)$ and no further update to $P$ will be performed), or (3) we have $P\neq \Pi(\mathcal{A}_1,\mathcal{A}_2,\ldots)$ for some agent. In the last case, by the properties of the fast black box, at the end of the good iteration, we have $P^*=\Pi(\mathcal{A}_1,\mathcal{A}_2,\ldots)$ for all agents, w.h.p. Once this event holds, we will update $P$ accordingly, and no further updates to $P$ will ever be performed by the protocol.

For a protocol compilation following Theorem~\ref{th:guarantees}(ii)(b), a correct result with certainty is reached within $O(\poly(n))$ rounds, w.h.p., starting from any reachable state of the protocol.

Finally, we note that w.h.p., the correct value of $P = \Pi(\mathcal{A}_1,\mathcal{A}_2,\ldots)$ will be set within $O(\log^5 n)$ rounds from the time when the leader election protocol has converged, and will never change again, w.h.p., in the polynomial number of steps until the convergence of the slow blackbox. While the start of the blackbox is not perfectly synchronized and agents start times differ w.h.p. by $O(\log n)$, a careful inspection of the \cite{DBLP:journals/dc/AngluinAE08a} protocol reveals that this is not an issue, slowing down the computation by an additional $O(\log n)$ time. After the convergence of the slow blackbox, it will never change again, as analyzed above. Overall, it follows that the protocol converges within $O(\log^5 n)$ rounds from the end of the initialization phase.
\end{proof}
Since the length of the initialization phase of the protocol is $O(n^\eps)$ for the compilation scheme of Theorem~\ref{th:guarantees}(ii)(b), we obtain that the \fun{\SemilinearExact} protocol provides an exact result in $O(n^\eps)$ rounds in expectation. When using the compilation scheme of Theorem~\ref{th:guarantees}(ii)(a), we obtain a convergence time of $O(\log^5 n)$ rounds, but only w.h.p. of correctness of result.

\bibliographystyle{alpha}
\bibliography{population}

\newcommand{\etalchar}[1]{$^{#1}$}
\begin{thebibliography}{CMN{\etalchar{+}}11}

\bibitem[AAD{\etalchar{+}}06]{DBLP:journals/dc/AngluinADFP06}
Dana Angluin, James Aspnes, Zo{\"{e}} Diamadi, Michael~J. Fischer, and
  Ren{\'{e}} Peralta.
\newblock Computation in networks of passively mobile finite-state sensors.
\newblock {\em Distributed Computing}, 18(4):235--253, 2006.

\bibitem[AAE08a]{DBLP:journals/dc/AngluinAE08}
D.~Angluin, J.~Aspnes, and D.~Eisenstat.
\newblock A simple population protocol for fast robust approximate majority.
\newblock {\em Distributed Computing}, 21(2):87--102, 2008.

\bibitem[AAE08b]{DBLP:journals/dc/AngluinAE08a}
Dana Angluin, James Aspnes, and David Eisenstat.
\newblock Fast computation by population protocols with a leader.
\newblock {\em Distributed Computing}, 21(3):183--199, 2008.

\bibitem[AAE{\etalchar{+}}17]{DBLP:conf/soda/AlistarhAEGR17}
D.~Alistarh, J.~Aspnes, D.~Eisenstat, R.~Gelashvili, and R.L. Rivest.
\newblock Time-space trade-offs in population protocols.
\newblock In {\em SODA}, pages 2560--2579, 2017.

\bibitem[AAG18]{DBLP:conf/soda/AlistarhAG18}
D.~Alistarh, J.~Aspnes, and R.~Gelashvili.
\newblock Space-optimal majority in population protocols.
\newblock In {\em SODA}, pages 2221--2239, 2018.

\bibitem[AG15]{DBLP:conf/icalp/AlistarhG15}
D.~Alistarh and R.~Gelashvili.
\newblock Polylogarithmic-time leader election in population protocols.
\newblock In {\em ICALP}, pages 479--491, 2015.

\bibitem[AGV15]{DBLP:conf/podc/AlistarhGV15}
D.~Alistarh, R.~Gelashvili, and M.~Vojnovi\'c.
\newblock Fast and exact majority in population protocols.
\newblock In {\em PODC}, pages 47--56, 2015.

\bibitem[BCER17]{DBLP:conf/podc/BilkeCER17}
A.~Bilke, C.~Cooper, R.~Els{\"{a}}sser, and T.~Radzik.
\newblock Brief announcement: Population protocols for leader election and
  exact majority with {$O(\log^2 n)$} states and {$O(\log^2 n)$} convergence
  time.
\newblock In {\em PODC}, pages 451--453, 2017.

\bibitem[BCN{\etalchar{+}}15]{DBLP:conf/soda/BecchettiCNPS15}
L.~Becchetti, A.E.F. Clementi, E.~Natale, F.~Pasquale, and R.~Silvestri.
\newblock Plurality consensus in the gossip model.
\newblock In {\em SODA}, pages 371--390, 2015.

\bibitem[BFGK16]{DBLP:conf/icalp/BerenbrinkFGK16}
P.~Berenbrink, T.~Friedetzky, G.~Giakkoupis, and P.~Kling.
\newblock Efficient plurality consensus, or: the benefits of cleaning up from
  time to time.
\newblock In {\em ICALP}, pages 136:1--136:14, 2016.

\bibitem[BFK12]{DBLP:conf/mfcs/BournezFK12}
Olivier Bournez, Pierre Fraigniaud, and Xavier Koegler.
\newblock Computing with large populations using interactions.
\newblock In {\em MFCS}, pages 234--246, 2012.

\bibitem[BKKO18]{DBLP:conf/soda/BerenbrinkKKO18}
P.~Berenbrink, D.~Kaaser, P.~Kling, and L.~Otterbach.
\newblock Simple and efficient leader election.
\newblock In {\em SOSA}, pages 9:1--9:11, 2018.

\bibitem[BKN17]{DBLP:conf/soda/BoczkowskiKN17}
L.~Boczkowski, A.~Korman, and E.~Natale.
\newblock Minimizing message size in stochastic communication patterns: Fast
  self-stabilizing protocols with 3 bits.
\newblock In {\em SODA}, pages 2540--2559, 2017.

\bibitem[CCDS17]{DBLP:journals/dc/ChenCDS17}
Ho{-}Lin Chen, Rachel Cummings, David Doty, and David Soloveichik.
\newblock Speed faults in computation by chemical reaction networks.
\newblock {\em Distributed Computing}, 30(5):373--390, 2017.

\bibitem[CGK{\etalchar{+}}15]{DBLP:conf/icalp/CzyzowiczGKKSU15}
J.~Czyzowicz, L.~G\k{a}sieniec, A.~Kosowski, E.~Kranakis, P.G. Spirakis, and
  P.~Uznański.
\newblock On convergence and threshold properties of discrete lotka-volterra
  population protocols.
\newblock In {\em ICALP}, pages 393--405, 2015.

\bibitem[CMN{\etalchar{+}}11]{DBLP:journals/tcs/ChatzigiannakisMNPS11}
I.~Chatzigiannakis, O.~Michail, S.~Nikolaou, A.~Pavlogiannis, and P.G.
  Spirakis.
\newblock Passively mobile communicating machines that use restricted space.
\newblock {\em Theor. Comput. Sci.}, 412(46):6469--6483, 2011.

\bibitem[DJW12]{DBLP:journals/algorithmica/DoerrJW12}
Benjamin Doerr, Daniel Johannsen, and Carola Winzen.
\newblock Multiplicative drift analysis.
\newblock {\em Algorithmica}, 64(4):673--697, 2012.

\bibitem[DK18]{DBLP:journals/corr/DudekK17}
B.~Dudek and A.~Kosowski.
\newblock Universal protocols for information dissemination using emergent
  signals.
\newblock In {\em STOC}, to appear, 2018.

\bibitem[DS15]{DBLP:conf/wdag/DotyS15}
D.~Doty and D.~Soloveichik.
\newblock Stable leader election in population protocols requires linear time.
\newblock In {\em DISC}, pages 602--616, 2015.

\bibitem[DV12]{DBLP:journals/siamco/DraiefV12}
M.~Draief and M.~Vojnovi\'c.
\newblock Convergence speed of binary interval consensus.
\newblock {\em {SIAM} J. Control and Optimization}, 50(3):1087--1109, 2012.

\bibitem[Fis83]{DBLP:conf/fct/Fischer83}
M.J. Fischer.
\newblock The consensus problem in unreliable distributed systems (a brief
  survey).
\newblock In {\em FCT}, pages 127--140, 1983.

\bibitem[GP16]{DBLP:conf/podc/GhaffariP16a}
M.~Ghaffari and M.~Parter.
\newblock A polylogarithmic gossip algorithm for plurality consensus.
\newblock In {\em PODC}, pages 117--126, 2016.

\bibitem[GS18]{DBLP:conf/soda/GasieniecS18}
L.~G\k{a}sieniec and G.~Stachowiak.
\newblock Fast space optimal leader election in population protocols.
\newblock In {\em SODA}, pages 2653--2667, 2018.

\bibitem[GSU18]{GSU18}
L.~G\k{a}sieniec, G.~Stachowiak, and P.~Uznański.
\newblock Almost logarithmic-time space optimal leader election in population
  protocols.
\newblock Sent to PODC, 2018.

\bibitem[MNRS14]{DBLP:conf/icalp/MertziosNRS14}
G.B. Mertzios, S.E. Nikoletseas, Ch. Raptopoulos, and P.G. Spirakis.
\newblock Determining majority in networks with local interactions and very
  small local memory.
\newblock In {\em ICALP}, pages 871--882, 2014.

\end{thebibliography}

\end{document}